\tikzstyle{sommet}=[circle, draw, fill=black, inner sep=2pt]
\tikzstyle{opt}=[draw = red, dashed]
\tikzstyle{alg}=[draw = blue]
\tikzstyle{algopt}=[double]
\tikzstyle{none}=[dotted]
\tikzstyle{Snake}=[decorate, decoration={snake}]
\tikzset{
>=stealth',
help lines/.style={dashed, thick},
axis/.style={<->},
important line/.style={thick},
connection/.style={thick, dotted},
}
\newcommand{\comp}{\textrm{cr}}
\newcommand{\br}{\textrm{dr}}
\newtheorem{theorem}{Theorem}
\newtheorem{definition}{Definition}
\newtheorem{lemma}{Lemma}
\newtheorem{corollary}[lemma]{Corollary}
\newtheorem*{Remarks*}{Remarks}
\title{Best-of-two-worlds analysis of online search}
\author{Spyros Angelopoulos\thanks{Sorbonne Universit\'e, CNRS, Laboratoire d'informatique de Paris 6, LIP6, F-75252 Paris, France. Supported by ANR OATA, DIM RFSI DACM, and the  Fondation Mathématique Jacques Hadamard  Programme Gaspard Monge PGMO.}
 \and Christoph D\"urr\footnotemark[1]{} \and Shendan Jin\footnotemark[1]{}}
\begin{document}

\maketitle

\begin{abstract}
In search problems, a mobile searcher seeks to locate a target that hides in some unknown position of the environment. Such problems are typically considered to be of an on-line nature, in that the input is unknown to the searcher, and the performance of a search strategy is usually analyzed by means of the standard framework of the competitive ratio, which compares the cost incurred by the searcher to an optimal strategy that knows the location of the target. However, one can argue that even for simple search problems, competitive analysis fails to distinguish between strategies which, intuitively, should have different performance in practice. 

Motivated by the above, in this work we introduce and study measures {\em supplementary} to competitive analysis in the context of search problems. In particular, we focus on the well-known problem of {\em linear search}, informally known as the {\em cow-path problem}, for which there is an infinite number of strategies that achieve an optimal competitive ratio equal to 9. We propose a measure that reflects the rate at which the line is being explored by the searcher, and which can be seen as an extension of the {\em bijective ratio} over an uncountable set of requests. Using this measure we show that a natural  strategy that explores the line aggressively is optimal among all 9-competitive strategies. This provides, in particular, a strict separation from the competitively optimal doubling strategy, which is much more conservative in terms of exploration. We also provide evidence that this aggressiveness is requisite for optimality, by showing that any optimal strategy must mimic the aggressive strategy in its first few explorations. 

\end{abstract}

\section{Introduction}
\label{sec:introduction}

Searching for a hidden target is an important paradigm in computer science and operations research, with numerous applications. A typical search problem involves an environment, a mobile {\em searcher} (who may, or may not, have knowledge of the environment) and a {\em hider} (sometimes also called {\em target}) who hides at some position within the environment that is oblivious to the searcher. The objective is to define a {\em search strategy}, i.e., a traversal of the environment, that optimizes a certain efficiency criterion. A standard approach to the latter is by means of {\em competitive analysis}, in which we seek to minimize the worst-case cost for locating the target, divided by some concept of ``optimal'' solution; e.g., the minimum cost to locate the target once its position is known. Even prior to the advent of online computation and competitive analysis, search games had already been studied under such {\em normalized} measures within operations research~\cite{beck:yet.more}. Explicit studies of the competitive ratio and the closely related {\em search ratio} were given in~\cite{yates:plane} and~\cite{koutsoupias:fixed}, respectively, and led to the development of {\em online searching}~\cite{jaillet:online,onlinesearching.survey} as a subfield of online computation. See also~\cite{searchgames} for an in-depth treatment of search games, including the role of payoff functions that capture the competitive ratio.

In this work we revisit one of the simplest, yet fundamental search problems, namely the {\em linear search}, or, informally, {\em cow-path} problem. The setting involves  an infinite (i.e., unbounded) line, with a point $O$ designated as its origin, a searcher which is initially placed at the origin, and an immobile target  which is at some position on the line that is unknown to the searcher. More specifically, the searcher does not know whether the hider is at the left branch or at the right branch of the line. The searcher's {\em strategy} $S$ defines its exploration of the line, whereas the hider's strategy $H$ is determined by its placement on the line. Given strategies $S,H$, the {\em cost} of locating the hider, denoted by $c(S,H)$ is the total distance traversed by the searcher at the first time it passes over $H$. Let $|H|$ denote the distance of the hider from the origin. The competitive ratio of $S$, denoted by $\comp(S)$, is the worst-case normalized cost of $S$, among all possible hider strategies. Formally, 
\begin{equation}
\comp(S)=\sup_H \frac{c(S,H)}{|H|}.
\label{eq:competitive.ratio}
\end{equation}    



It has long been known~\cite{beck:ls, gal:general} that the competitive ratio of linear search is 9, and is achieved by a simple {\em doubling} strategy: in iteration $i$, the searcher starts from $O$, explores branch $i \bmod 2$ at a length equal to $2^i$, and then returns to $O$. However, this strategy is not uniquely optimal; in fact, it is known that there is an infinite number of competitively optimal strategies for linear search (see Lemma~\ref{lem:CR_9_inequalities} in Section~\ref{sec:bijective.comp}). In particular, consider an ``aggressive'' strategy, which in each iteration searches a branch to the maximum possible extent, while maintaining a competitive ratio equal to 9. This can be achieved by searching, in iteration $i$, branch $i \bmod 2$ to a length equal to $(i+2)2^{i+1}$ (see Corollary~\ref{cor:R4_xn_and_Tn}).

While both {\tt doubling} and {\tt aggressive} are optimal in terms of competitive ratio, there exist realistic situations in which the latter may be preferable to the former. Consider, for example, a search-and-rescue mission for a missing backpacker who has disappeared in one of two (very long) concurrent, hiking paths. Assuming that we select our search strategy from the space of 9-competitive strategies, it makes sense to choose one that is tuned to discovering new territory, rather than a conservative strategy that tends to often revisit already explored areas.

With the above observation in mind, we first need to quantify what constitutes efficiency in exploration. To this end, given a strategy $S$ and $l \in {\mathbb R}^+$, we define $D(S,l)$ as the cost incurred by $S$ the first time the searcher has explored an aggregate length equal to $l$, combined in both branches. An efficient strategy should be such that $D(S,l)$ is small, for all $l$. Unfortunately, this criterion by itself is insufficient: Consider a strategy that first searches one branch to a length equal to $L$, where $L$ is very large. Then $D(S,l)$ is as small as possible for all $l<L$; however, this is hardly a good strategy, since it all but ignores one of the branches (and thus its competitive ratio becomes unbounded as $L \rightarrow \infty$).

To remedy this situation, we will instead use the above definition in a way that will allow us a pairwise comparison of strategies, which also considers all possible explored lengths. More formally, we define the following:
\begin{definition}
Let $S_1,S_2$ denote two search strategies, we define the {\em discovery ratio of $S_1$ against $S_2$}, denoted by $\br(S_1,S_2)$,  as
\[
\br(S_1,S_2)=\sup_{l \in {\mathbb R}^+} \frac{D(S_1,l)}{D(S_2,l)}.
\]
Moreover, given a class ${\cal S}$ of search strategies, the {\em discovery ratio of $S$ against the class ${\cal S}$} is defined as 
\[
\br(S, {\cal S})=\sup_{S' \in {\cal S}} \br(S,S').
\]
In the case ${\cal S}$ is the set $\Sigma$ of all possible strategies, we simply call $\br(S, {\cal S})$ the {\em discovery ratio of $S$},
and we denote it by $\br(S)$. 
\label{def:br}
\end{definition}

Intuitively, the discovery ratio preserves the worst-case nature of competitive analysis, and at the same time bypasses the need for 
an ``offline optimum'' solution. Note that if a strategy $S$ has competitive ratio $c$ then it also has discovery ratio
$c$; this follows easily from the fact that for every hider position $H$, $c(S,H) \geq D(S,|H|)$. 
However, the opposite is not necessarily true.

It is worth pointing out that the discovery ratio can be interpreted, roughly speaking, as the {\em bijective ratio} over a continuous space
of request sequences, or, more precisely, as its analogue in the setting in which this space consists of an infinite, uncountable set of requests. 
The bijective ratio was introduced in~\cite{bijective.ratio} as an extension of (exact) bijective analysis of online algorithms~\cite{ADLO07:paging},
and which in turn is based on the pairwise comparison of the costs induced by two online algorithms over all request sequences of a certain size.
Bijective analysis has been applied in fundamental online problems (with a discrete, finite set of requests) such as paging and list update~\cite{AS:bijective}, $k$-server~\cite{Boyar:measures,bijective.ratio}, and online search\footnote{In~\cite{DBLP:journals/tcs/BoyarLM14}, online search refers to the problem of selling a specific item at the highest possible price, and is not related to the problem of searching for a target.}~\cite{DBLP:journals/tcs/BoyarLM14}. Our interpretation of the bijective ratio can be useful for other online problems which are defined over a continuous setting of requests (e.g., $k$-server problems defined over a metric space rather than over a finite graph).  

The above observation implies that the discovery ratio inherits the appealing properties of bijective analysis, which further motivate its choice. 
In particular, note that bijective analysis has helped to identify theoretically efficiently algorithms which also tend to perform well in practice (such as Least-Recently-Used for paging~\cite{AS:bijective}, and greedy-like $k$-server policies for certain types of metrics~\cite{bijective.ratio}). Furthermore, if an algorithm has bijective ratio $c$, then its average cost, assuming a uniform distribution over all request sequences of the same length, is within a factor $c$ of the average cost of any other algorithm. Thus, bijective analysis can be used to establish  ``best of both worlds''  types of performance comparisons. In fact, assuming again uniform distributions, much stronger conclusions can be obtained, in that bijective analysis implies a {\em stochastic dominance} relation between the costs of the two algorithms~\cite{bijective.ratio}. 

It should be noted that the central question we study in this work is related to a phenomenon that is not unusual  in the realm of online computation.
Namely, for certain online problems, competitive analysis results in very coarse performance classification of algorithms. This is due to the pessimistic, worst-case nature of the competitive ratio.  The definitive example of an online problem in which this undesired situation occurs is the (standard) paging problem in a virtual memory system, which motivated the introduction of several analysis techniques alternative to the competitive ratio (see~\cite{survey} for a survey). In our paper we demonstrate that a similar situation arises in the context of online search, and we propose a remedy by means of the discovery ratio. We emphasize, however, that in our main results, we apply the discovery ratio as supplementary to the competitive ratio, instead of using it antagonistically as a measure that replaces the competitive ratio altogether. 

\section{Connections between the discovery and the bijective ratios}
\label{sec:motivation}

In this section we establish a connection between the discovery and the bijective ratios.
Bijective analysis was introduced in~\cite{ADLO07:paging} in the context of online computation, assuming that each request is drawn from a discrete, finite set. For instance, in the context of the paging problem, each request belongs to the set of all pages. Let ${\cal I}_n$ denote the set of all 
requests of size $n$. For a cost-minimization problem $\Pi$ with discrete, finite requests, let $\pi:{\cal I}_n \rightarrow {\cal I}_n$ denote a bijection over ${\cal I}_n$. Given two online algorithms $A$ and $B$ for $\Pi$, the bijective ratio of $A$ against $B$, is defined as 
\[
\textrm{br}(A,B)=\min_{\pi:{\cal I}_n \rightarrow {\cal I}_n} \sup_{\sigma \in {\cal I}_n} \frac{A(\sigma)}{B(\pi(\sigma))}, \textrm{ for all $n\geq n_0$},
\]
where $A(\sigma)$ denotes the cost of $A$ on request sequence $\sigma$.

Assuming ${\cal I}_n$ is finite, 
an equivalent definition of $\textrm{br}(A,B)$ is as follows. Let $A(i,n)$ denote the $i$-th least costly request sequence for $A$ among request sequences in 
${\cal I}_n$. Then 
\[
\textrm{br}(A,B) =\sup_n \max_i \frac{A(i,n)}{B(i,n)}.
\]
Consider in contrast, the linear search problem. Here, there is only one request: the unknown position of the hider (i.e., $n=1$). However, the set of all requests is not only infinite, but uncountable. Thus, the above definitions do not carry over to our setting, and we need to seek alternative definitions. One possibility is to discretize the set of all requests (as in~\cite{bijective.ratio}). Namely, we may assume that the hider can hide only at integral distances from the origin. Then given strategies $S_1,S_2$, one could define the bijective ratio of $S_1$ against $S_2$ as $\sup_i \frac{S_1(i)}{S_2(i)}$, where $S(i)$ denotes the $i$-th least costly request (hider position) in strategy $S$.

While the latter definition may indeed be valid, it is still not a faithful representation of the continuous setting. For instance,
for hiding positions ``close'' to the origin, the discretization adds overheads that should not be present, and skews the expressions of the ratios. 
For this reason, we need 
to adapt the definition so as to reflect the continuous nature of the problem. Specifically, note that while the concept ``the cost of the $i$-th least costly request in $S$'' is not well-defined in the continuous setting, the related concept of ``the cost for discovering a total length equal to $l$ in $S$'' is, in fact, well defined, and is precisely the value $D(S,l)$.  We can thus define the bijective ratio of $S_1$ against $S_2$ as
\[
\textrm{br}(S_1,S_2) =\sup_l \frac{D(S_1,l)}{D(S_2,l)},
\]
which is the same as the definition of the discovery ratio (Definition~\ref{def:br}).



\paragraph*{Contribution}
We begin, in Section~\ref{sec:bijective.all}, by identifying the optimal tradeoff between the competitive ratio of a strategy and its discovery ratio (against all possible strategies). The result implies that there are strategies of discovery ratio $2+\epsilon$, for arbitrarily small $\epsilon>0$, which is tight. As corollary, we obtain that strategy {\tt doubling} has discovery ratio equal to 3. These results allow us to 
set up the framework and provide some intuition for our main results, but also demonstrate that the discovery ratio, on itself, does not lead to a useful classification of strategies, when one considers the entire space of strategies. 

Our main technical results are obtained in Section~\ref{sec:bijective.comp}. Here, we apply synthetically both the competitive and the discovery ratios. More precisely, we restrict our interest to the set of competitively optimal strategies, which we further analyze using the discovery ratio as a supplementary measure. We prove that the strategy 
{\tt aggressive}, which explores the branches to the furthest possible extent while satisfying the competitiveness constraint, has discovery ratio 1.6; moreover, we show that this is the optimal discovery ratio in this setting. 
In contrast, we show that the strategy {\tt doubling} has discovery ratio 2.333. In addition, we provide evidence that such ``aggressiveness'' is requisite. More precisely, we show that any competitively optimal strategy that is also optimal with respect to the discovery ratio must have the exact same behavior as the aggressive strategy in the first five iterations.  


In terms of techniques, the main technical difficulty in establishing the discovery ratios stems from answering the following question: given a 
length $l \in {\mathbb R}^+$, what is the strategy $S$ that minimizes $D(S,l)$, and how can one express this minimum discovery cost? 
This is a type of {\em inverse} or {\em dual} problem that can be of independent interest in the context of search problems, in the spirit of a concept such as the {\em reach} of a strategy~\cite{HIKL99:fixed.distance}, also called {\em extent} in~\cite{jaillet:online} (and which is very useful in the competitive analysis of search strategies). We model this problem as a linear program for whose objective value we first give a lower bound; then we show this bound is tight by providing an explicit 9-competitive strategy which minimizes $D(S,l)$.

\paragraph*{Related work}
The linear search problem was first introduced and studied in works by Bellman~\cite{bellman} and Beck~\cite{beck:ls}. The generalization of linear search to $m$ concurrent, semi-infinite branches is known as {\em star search} or {\em ray search}; thus linear search is equivalent to star search for $m=2$. 
Optimal strategies for linear search under the (deterministic) competitive
ratio were first given by \cite{beck:yet.more}. Moreover~\cite{gal:minimax} gave optimal strategies for the generalized problem of star search, a result that was rediscovered later~\cite{yates:plane}.
Some of the related work includes the study of randomization~\cite{ray:2randomized}; multi-searcher strategies~\cite{alex:robots};
multi-target searching~\cite{hyperbolic,oil}; 
searching with turn cost~\cite{demaine:turn, Angelopoulos2017}; searching with an upper bound on the target distance~\cite{HIKL99:fixed.distance,revisiting:esa}; fault-tolerant search~\cite{czyzowicz_et_al:LIPIcs:2016:6797}; and the variant in which some probabilistic information on target placement is known~\cite{jaillet:online, informed.cows}. This list is not exclusive; see also Chapter 8 in the book~\cite{searchgames}.

Linear search and its generalization can model settings in which we seek an intelligent allocation of resources to tasks under uncertainty. For this reason, the problem and its solution often arises in the context of diverse fields such as AI (e.g., in the design of {\em interruptible algorithms}~\cite{steins}) and databases (e.g., {\em pipeline filter ordering}~\cite{Condon:2009:ADA:1497290.1497300}).

Strategy {\tt aggressive} has been studied in~\cite{HIKL99:fixed.distance,jaillet:online} in the special case of maximizing the {\em reach} of a strategy (which informally is the maximum possible extent to which the branches can be searched without violating competitiveness) when we do not know the distance of the target from the origin. Although this gives some intuition that {\tt aggressive} is indeed a good strategy, to the best of our knowledge, our work is the first that quantifies this intuition, in terms of comparing to other competitively optimal strategies using a well-defined performance measure.

\paragraph*{Preliminaries}
In the context of linear search, the searcher's strategy can be described as an (infinite) sequence of lengths at which the two branches (numbered 0,1, respectively) are searched. Formally, a search strategy is determined by an infinite sequence of {\em search segments} $\{x_0, x_1, \ldots\}$ 
such that $x_i>0$ and $x_{i+2}>x_i$ for all $i\in {\mathbb N}$, in the sense that in iteration $i$, the searcher starts from the origin, searches branch $i \bmod 2$ to distance $x_i$ from the origin, and then returns back to $O$. We require that the search segments induce a complete exploration of both branches of the line, in that for every $d \in {\mathbb R}^+$, there exist $i,j \in {\mathbb N}$ such that $x_{2i} \geq d$, and $x_{2j+1} \geq d$.

The constraint $x_{i+2}>x_i$ implies that the searcher explores a new portion of the line in each iteration. It is easy to see that any other strategy $X$ that does not conform to the above (namely, a strategy such that iterations $i$, $i+1$ search the same branch, or a strategy in which $x_{i+2} \leq x_i$ can be transformed to a conforming strategy $X'$ such that for any hider $H$, $c(X',S)\leq c(X,H)$). For convenience of notation, we will define $x_i$ to be equal to 0, for all $i<0.$ 
Given a strategy $X$, we define $T_n(X)$ (or simply $T_n$, when $X$ is clear from context) 
to be equal to $\sum_{i=0}^n x_i$. For $n<0$, we define $T_n:=0$.

We say that the searcher {\em turns} in iteration $i$ at the moment it switches directions during iteration $i$, namely when it completes the exploration of length $x_i$ and returns back to the origin. Moreover, at any given point in time $t$ (assuming a searcher of unit speed), the number of turns incurred by time $t$ is defined accordingly.

We will denote by $\Sigma$ the set of all search strategies, and by $\Sigma_c$ the subset of $\Sigma$ that consists of strategies with competitive ratio $c$. Thus $\Sigma_9$ is the set of competitively optimal strategies, and $\Sigma_\infty \equiv \Sigma$. When evaluating the competitive ratio, 
we will make the standard assumption that the target must be at distance at least 1 from $O$, since no strategy can have bounded competitive ratio if this distance can be arbitrarily small.


\section{Strategies of optimal discovery ratio in $\Sigma$}
\label{sec:bijective.all}

We begin, by establishing the optimal tradeoff between the competitive ratio and the discovery ratio against all possible 
strategies. This will allow us to obtain strategies of optimal discovery ratio, and also setup some properties of the measure that will be useful
in Section~\ref{sec:bijective.comp}. 

Let $X,Y$, denote two strategies in $\Sigma$, with $X=(x_0,x_1,\ldots)$. From the definition of the discovery ratio we have that
\[
\br(X,Y)=\sup_{i \in \mathbb N} \sup_{\delta \in (0,x_{i}-x_{i-2}]} \frac{D(X,x_{i-1}+x_{i-2}+\delta)}{D(Y,x_{i-1}+x_{i-2}+\delta)}.
\]
Note that for $i = 0$, we have 
$
\frac{D(X,x_{i-1}+x_{i-2}+\delta)}{D(Y,x_{i-1}+x_{i-2}+\delta)} = \frac{D(X,\delta)}{D(Y,\delta)} \leq \frac{\delta}{\delta} = 1.
$
This is because for all $\delta \leq x_0$, $D(X,\delta) = \delta$, and for all $\delta>0$, $D(Y,\delta) \geq \delta$. 
Therefore, 
\begin{equation}
\br(X,Y)=\sup_{i \in \mathbb N^*} \sup_{\delta \in (0,x_{i}-x_{i-2}]} \frac{D(X,x_{i-1}+x_{i-2}+\delta)}{D(Y,x_{i-1}+x_{i-2}+\delta)}.
\label{eq:br.all.1}
\end{equation}
The following theorem provides an expression of the discovery ratio in terms of the search segments of the strategy.

\begin{theorem}
Let $X=(x_0,x_1, \ldots)$. Then  
\[
\br(X,\Sigma) = \sup_{i \in \mathbb N^* } \frac{2\sum_{j=0}^{i-1} x_j+x_{i-2}}{x_{i-1}+x_{i-2}}.
\]
\label{thm:br.all.strategies}
\end{theorem}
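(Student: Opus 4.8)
The plan is to reduce the comparison against the entire class $\Sigma$ to a single ratio involving only $X$. The key observation is that, for a fixed target length $l$, the quantity $D(S',l)$ is minimized over $S'\in\Sigma$ exactly when the competitor incurs no overhead, i.e. $\inf_{S'\in\Sigma}D(S',l)=l$. Indeed, at the first moment a strategy has explored an aggregate length $l$, it must have physically traversed the two newly-explored intervals (one per branch) whose lengths sum to $l$; since these traversals occur on disjoint portions of the trajectory, the total distance travelled is at least $l$, giving $D(S',l)\geq l$. This bound is attained (for the given $l$) by any strategy whose first search segment satisfies $x_0\geq l$, which is a legitimate member of $\Sigma$. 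Since $\br(X,\Sigma)=\sup_{S'}\sup_{l}\frac{D(X,l)}{D(S',l)}$ is a supremum of suprema, I may interchange the two and choose the competitor depending on $l$, obtaining
\[
\br(X,\Sigma)=\sup_{l\in\mathbb R^+}\frac{D(X,l)}{\inf_{S'\in\Sigma}D(S',l)}=\sup_{l\in\mathbb R^+}\frac{D(X,l)}{l}.
\]

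Next I would make $D(X,l)$ explicit using the interval decomposition already recorded in~\eqref{eq:br.all.1}. For $i\in\mathbb N^*$ and $\delta\in(0,x_i-x_{i-2}]$, the length $l=x_{i-1}+x_{i-2}+\delta$ is reached while the searcher is $\delta$ units into the new territory of iteration $i$; the cost at that instant is the cost $2T_{i-1}$ of completing iterations $0,\ldots,i-1$ plus the outbound leg $x_{i-2}+\delta$ of iteration $i$, so that $D(X,x_{i-1}+x_{i-2}+\delta)=2T_{i-1}+x_{i-2}+\delta$. (Note that $D(X,\cdot)$ jumps upward by $x_{i-1}+x_{i-2}$ as $l$ crosses $x_{i-1}+x_{i-2}$, because between finishing iteration $i-1$ and discovering new length in iteration $i$ the searcher merely re-traverses explored ground; the relevant value is therefore the right limit $2T_{i-1}+x_{i-2}$.) Substituting into the reduced expression, and using the $i=0$ remark from the excerpt which contributes only the value $1$, gives
\[
\br(X,\Sigma)=\sup_{i\in\mathbb N^*}\ \sup_{\delta\in(0,\,x_i-x_{i-2}]}\frac{2T_{i-1}+x_{i-2}+\delta}{x_{i-1}+x_{i-2}+\delta}.
\]

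Finally I would optimize the inner supremum over $\delta$. Writing $a=2T_{i-1}+x_{i-2}$ and $b=x_{i-1}+x_{i-2}$, the map $\delta\mapsto\frac{a+\delta}{b+\delta}$ has derivative with the sign of $b-a$; here $a-b=2\sum_{j=0}^{i-2}x_j+x_{i-1}>0$, so the ratio is strictly decreasing in $\delta$ and its supremum on $(0,x_i-x_{i-2}]$ is the right limit at $\delta=0$, namely $a/b=\frac{2\sum_{j=0}^{i-1}x_j+x_{i-2}}{x_{i-1}+x_{i-2}}$. Taking the supremum over $i\in\mathbb N^*$ yields the claimed formula. I expect the main conceptual step to be the reduction in the first paragraph, and in particular the justification that the worst-case competitor may be selected separately for each $l$ and that its discovery cost can be driven down to exactly $l$; the remaining steps amount to a direct cost accounting followed by an elementary monotonicity argument.
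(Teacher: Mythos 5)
Your proof is correct and follows essentially the same route as the paper's: the paper likewise uses $D(Y,l)\geq x_{i-1}+x_{i-2}+\delta$ for every competitor $Y$ together with a competitor $Y_i$ whose first segment equals $x_{i-1}+x_{i-2}+\delta$ (your ``$x_0\geq l$'' strategy), plus the same explicit formula $D(X,x_{i-1}+x_{i-2}+\delta)=2\sum_{j=0}^{i-1}x_j+x_{i-2}+\delta$ and the same monotonicity-in-$\delta$ step. Your only cosmetic difference is packaging the two bounds as the single identity $\br(X,\Sigma)=\sup_l D(X,l)/l$ via interchanging the two suprema, which is a clean but equivalent reorganization of the paper's separate upper- and lower-bound arguments.
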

\begin{proof}
Fix $Y \in \Sigma$. 
From the definition of search segments in $X$, we have that 
\begin{equation}
D(X,x_{i-1}+x_{i-2}+\delta)=2\sum_{j=0}^{i-1} x_j+x_{i-2}+\delta, \quad \textrm{for $\delta \in (0,x_{i}-x_{i-2}]$}.
\label{eq:searched.algo}
\end{equation}
Moreover, for every $Y$, we have 
\begin{equation}
D(Y,x_{i-1}+x_{i-2}+\delta) \geq x_{i-1}+x_{i-2}+\delta. 
\label{eq:searched.opt}
\end{equation}
Substituting~\eqref{eq:searched.algo} and~\eqref{eq:searched.opt} in~\eqref{eq:br.all.1} we obtain 
\begin{equation}
\br(X,Y) \leq \sup_{i \in \mathbb N^*} \sup_{\delta \in (0,x_{i}-x_{i-2}]} \frac{2\sum_{j=0}^{i-1} x_j+x_{i-2}+\delta}{x_{i-1}+x_{i-2}+\delta} \leq 
\sup_{i \in \mathbb N^*} \frac{2\sum_{j=0}^{i-1} x_j+x_{i-2}}{x_{i-1}+x_{i-2}}.
\label{eq:br.all.2}
\end{equation}
For the lower bound, consider a strategy $Y_i=(y_0^i,y_1^i, \ldots)$, for which $y_0^i=x_{i-1}+x_{i-2}+\delta$ (the values of $y_j^i$ for 
$j\neq 0$ are not significant, as long as $Y_i$ is a valid strategy). Clearly, 
$D(Y_i,x_{i-1}+x_{i-2}+\delta)=x_{i-1}+x_{i-2}+\delta$. Therefore,~\eqref{eq:br.all.1} implies
\begin{equation}
\br(X,Y_i) \geq \sup_{\delta \in (0,x_{i}-x_{i-2}]} \frac{2\sum_{j=0}^{i-1} x_j+x_{i-2}+\delta}{x_{i-1}+x_{i-2}+\delta}=
\frac{2\sum_{j=0}^{i-1} x_j+x_{i-2}}{x_{i-1}+x_{i-2}}.
\label{eq:bijective.Y_i}
\end{equation}
The lower bound on $\br(X,\Sigma)$ follows from $\br(X,\Sigma)\geq \sup_{i \in \mathbb N^*} \br(X,Y_i)$.
\end{proof}
In particular, note that for $i=2$, Theorem~\ref{thm:br.all.strategies} shows that for any strategy $X$, 
\[
\br(X,\Sigma)\geq \frac{3x_0+2x_1}{x_0+x_1}\geq 2.
\]
We will show that there exist strategies with discovery ratio arbitrarily close to 2, thus optimal for $\Sigma$. 
To this end, we will consider the geometric
search strategy defined as $G_\alpha=(1,\alpha,\alpha^2, \ldots)$, with $\alpha>1$.

\begin{lemma}
For $G_\alpha$ defined as above, we have $\br(G_\alpha,\Sigma)=\frac{2 \alpha^2+ \alpha-1}{\alpha^2-1}$.
\label{lemma:br.exponential.all}
\end{lemma}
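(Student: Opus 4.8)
The plan is to apply Theorem~\ref{thm:br.all.strategies} directly to the geometric strategy $G_\alpha=(1,\alpha,\alpha^2,\ldots)$, where $x_j=\alpha^j$, and then evaluate the resulting supremum over $i\in\mathbb{N}^*$. Substituting $x_j=\alpha^j$ into the theorem's formula, I first compute the closed form of each term. The numerator $2\sum_{j=0}^{i-1}x_j+x_{i-2}$ becomes $2\frac{\alpha^i-1}{\alpha-1}+\alpha^{i-2}$ (with the convention $x_{i-2}=0$ when $i=1$), and the denominator $x_{i-1}+x_{i-2}$ becomes $\alpha^{i-1}+\alpha^{i-2}=\alpha^{i-2}(\alpha+1)$ for $i\geq 2$. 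So the bulk of the work is to simplify the ratio
\[
f(i)=\frac{2\frac{\alpha^i-1}{\alpha-1}+\alpha^{i-2}}{\alpha^{i-2}(\alpha+1)}
\]
and show that its supremum over $i$ equals $\frac{2\alpha^2+\alpha-1}{\alpha^2-1}$.

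The natural next step is to analyze the behavior of $f(i)$ as $i$ varies. I expect that the $-1$ term in the numerator (coming from the geometric sum $\sum_{j=0}^{i-1}\alpha^j=\frac{\alpha^i-1}{\alpha-1}$) makes $f$ monotonically \emph{increasing} in $i$, since this negative constant contribution shrinks in relative magnitude as $i$ grows and the dominant terms are all proportional to $\alpha^i$. Thus the supremum should be attained in the limit $i\to\infty$. Dividing numerator and denominator by $\alpha^{i-2}=\alpha^i/\alpha^2$ and letting $i\to\infty$, the $-1$ term vanishes and the ratio tends to
\[
\frac{2\frac{\alpha^2}{\alpha-1}+1}{\alpha+1}=\frac{2\alpha^2+\alpha-1}{(\alpha-1)(\alpha+1)}=\frac{2\alpha^2+\alpha-1}{\alpha^2-1},
\]
which matches the claimed value. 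To make this rigorous, I would verify monotonicity by comparing $f(i+1)$ and $f(i)$ directly, or equivalently by checking that the cross-multiplied inequality $f(i)\leq f(i+1)$ reduces to a manifestly true statement for $\alpha>1$ (the key fact being that the constant $-\frac{2}{\alpha-1}$ term in the numerator is negative, which pulls every finite-$i$ value strictly below the limit).

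A subtlety worth handling carefully is the $i=1$ case, where $x_{i-2}=x_{-1}=0$ by the stated convention. There $f(1)=\frac{2x_0+0}{x_0+0}=2$, and since $\frac{2\alpha^2+\alpha-1}{\alpha^2-1}=2+\frac{\alpha+1}{\alpha^2-1}=2+\frac{1}{\alpha-1}>2$, this boundary term does not affect the supremum; it is consistent with monotonicity. The main (though modest) obstacle is simply the bookkeeping in establishing monotonicity cleanly across the transition from $i=1$ to $i\geq 2$ and confirming that the supremum is a genuine limit rather than an attained maximum. Once monotonicity is in place, the computation of the limit is routine algebra, and the lemma follows immediately.
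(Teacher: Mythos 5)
Your proposal is correct and follows essentially the same route as the paper: substitute the geometric segments into Theorem~\ref{thm:br.all.strategies}, show the resulting expression increases in $i$ toward its limit (the paper does this by differentiating in a continuous variable $i$, you by noting that the negative $-\frac{2}{\alpha-1}$ correction term vanishes relative to the dominant terms, so every finite-$i$ value lies strictly below the limit), and identify the supremum with $\lim_{i\to\infty} f(i)=\frac{2\alpha^2+\alpha-1}{\alpha^2-1}$. Your explicit handling of the $i=1$ case with the convention $x_{-1}=0$ is a minor point the paper glosses over, but it changes nothing since that boundary value is below the limit.
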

\begin{proof}
From Theorem~\ref{thm:br.all.strategies} we have 
\[
\br(G_\alpha, \Sigma) = \sup_{i \in \mathbb N^*} \frac{2\sum_{j=0}^{i-1}\alpha^j + \alpha^{i-2} }{\alpha^{i-1} + \alpha^{i-2}} 
	  = \sup_{i \in \mathbb N^*} \frac{2(\frac{\alpha^{i}-1}{\alpha-1}) + \alpha^{i-2}}{\alpha^{i-1} + \alpha^{i-2} } 
	  = \sup_{i \in \mathbb N^*} \frac{2(\alpha^i-1) + \alpha^{i-1} - \alpha^{i-2} }{\alpha^{i} - \alpha^{i-2}}. 
\]
The derivative of the function $f(i):=\frac{2(\alpha^i-1) + \alpha^{i-1} - \alpha^{i-2} }{\alpha^{i} - \alpha^{i-2} }$ in $i$ is
\[
f'(i)=\frac{2 \alpha^{2-i} \log \alpha}{\alpha^2-1},
\]
which is positive. Thus, $\sup_{i \in {\mathbb N}^*} f(i)=\lim_{i \rightarrow \infty} f(i)$, which gives
\[
\br(G_\alpha, \Sigma) = \lim_{i \rightarrow +\infty}f(i) 
				=  \lim_{i \rightarrow +\infty} \frac{2(\alpha^i-1) + \alpha^{i-1} - \alpha^{i-2} }{\alpha^{i} - \alpha^{i-2} } 
				= \frac{2 \alpha^2 + \alpha- 1}{\alpha^2 -1}.
\]
\end{proof}

In particular, Lemma~\ref{lemma:br.exponential.all} shows that the discovery ratio of $G_\alpha$ tends to 2, as $\alpha \rightarrow \infty$, hence $G_\alpha$
has asymptotically optimal discovery ratio. However, we can show a stronger result, namely that $G_\alpha$ achieves the optimal trade-off between the discovery ratio and the competitive ratio. This is established in the following theorem, whose proof is based on results 
by Gal~\cite{gal80:search-games} and Schuierer~\cite{schuierer:lower} that, informally, lower-bounds the supremum of an infinite sequence of functionals by the supremum of simple functionals of a certain geometric sequence. 
We also note that the competitive ratio of $G_\alpha$ is known to be equal to $1+2\frac{\alpha^2}{\alpha-1}$ (and is minimized for $\alpha=2$). 

\begin{theorem}
For every strategy $X \in \Sigma$, there exists $\alpha>1$ such that $\br(X, \Sigma) \geq \frac{2\alpha^2 + \alpha- 1}{\alpha^2 -1}$ 
and $\comp(X) \geq 1+2\frac{\alpha^2}{\alpha-1}$. 
\label{thm:tradeoff.geometric}
\end{theorem}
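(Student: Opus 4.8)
The plan is to exhibit a single value $\alpha>1$ for which the geometric strategy $G_\alpha$ is dominated by $X$ \emph{simultaneously} in both measures, that is $\comp(X)\ge\comp(G_\alpha)$ and $\br(X,\Sigma)\ge\br(G_\alpha,\Sigma)$; the statement then follows immediately from Lemma~\ref{lemma:br.exponential.all}, which gives $\br(G_\alpha,\Sigma)=\frac{2\alpha^2+\alpha-1}{\alpha^2-1}$, together with the known value $\comp(G_\alpha)=1+2\frac{\alpha^2}{\alpha-1}$. First I would record closed forms for the two quantities in terms of the search segments. Writing $T_{i-1}=\sum_{j=0}^{i-1}x_j$, Theorem~\ref{thm:br.all.strategies} already supplies $\br(X,\Sigma)=\sup_{i\in\mathbb{N}^*}\frac{2T_{i-1}+x_{i-2}}{x_{i-1}+x_{i-2}}$, while the standard worst-case analysis of linear search (the target placed just beyond $x_{i-2}$ on the branch searched in iteration $i$) yields $\comp(X)=1+2\sup_{i\ge 2}\frac{T_{i-1}}{x_{i-2}}$. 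Both are suprema of scale-invariant functionals that are ratios of non-negative (prefix) linear combinations of the $x_j$, which is exactly the form handled by the functional theorem of Gal~\cite{gal80:search-games} and Schuierer~\cite{schuierer:lower}.

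Next I would reduce each supremum to a geometric sequence. The completeness requirement on the search segments forces $x_i\to\infty$, so the hypotheses of the functional theorem apply. That theorem lower-bounds the supremum of such a functional over $X$ by its value on a geometric sequence $G_\alpha$, where $\alpha$ is a growth rate intrinsic to $X$ (an accumulation rate of the ratios $x_{i+1}/x_i$, extracted along a suitable subsequence). The essential point, and the reason the two bounds can be merged, is that this $\alpha$ depends on $X$ alone and not on which of the two functionals is fed into the theorem; moreover, on a geometric sequence both functionals are monotone in $i$ and attain their supremum in the limit: for $\br$ this was verified in the proof of Lemma~\ref{lemma:br.exponential.all} via $f'(i)>0$, and the competitive functional equals $\frac{\alpha^2}{\alpha-1}-\frac{\alpha^{2-i}}{\alpha-1}$, which increases to $\frac{\alpha^2}{\alpha-1}$. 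Applying the theorem to each functional with this common $\alpha$ then gives $\comp(X)\ge\comp(G_\alpha)$ and $\br(X,\Sigma)\ge\br(G_\alpha,\Sigma)$ at once, which is what we need.

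Finally, the degenerate case $\comp(X)=\infty$ should be dispatched directly: every strategy satisfies $\br(X,\Sigma)\ge\frac{3x_0+2x_1}{x_0+x_1}>2$ strictly (by the $i=2$ instance of Theorem~\ref{thm:br.all.strategies}, since $x_0>0$), whereas $\br(G_\alpha,\Sigma)\to 2$ and $\comp(G_\alpha)\to\infty$ as $\alpha\to\infty$, so a sufficiently large $\alpha$ satisfies both inequalities trivially.

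The hard part is the \emph{simultaneity}: guaranteeing that one and the same $\alpha$ serves both functionals. The naive shortcuts, namely fixing $\alpha$ by matching one of the two measures exactly and then deriving the other, are bound to fail, because each measure is itself a supremum, so matching it yields an \emph{upper} bound on the prefix sums $T_{i-1}$, whereas both target inequalities require \emph{lower} bounds on them. This is precisely why the argument must route through the Gal--Schuierer functional theorem, which produces a geometric minorant valid for every admissible functional at once rather than for one at a time. A secondary technical obstacle is the careful extraction of the growth rate $\alpha$ when the ratios $x_{i+1}/x_i$ fail to converge, which is exactly the refinement that Schuierer's subsequence argument contributes over Gal's original formulation.
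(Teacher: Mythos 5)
Your proposal is correct and takes essentially the same route as the paper: both arguments lower-bound the two measures by suprema of scale-invariant functionals of the search segments and invoke the Gal--Schuierer theorem, whose $\alpha=\limsup_{n}(x_n)^{1/n}$ is intrinsic to $X$ and independent of the functional, so one and the same $G_\alpha$ minorizes both quantities simultaneously --- exactly the ``simultaneity'' point you identify as the crux. The only (cosmetic) divergence is the degenerate case: you dispatch $\comp(X)=\infty$ by choosing $\alpha$ large, whereas the paper dispatches $\alpha=1$ by noting both ratios are then infinite; the two treatments are equivalent.
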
  
\begin{proof}
Let $X=(x_0,x_1,\ldots)$ denote a strategy in $\Sigma$. From~\eqref{eq:bijective.Y_i} we know that 
\[
\br(X,\Sigma) \geq \sup_i F_i(X),
\]
where $F_i(X)$ is defined as the functional $\frac{2\sum_{j=0}^{i-1} x_j+x_{i-2}}{x_{i-1}+x_{i-2}}$. Moreover, the competitive ratio of $X$
can be lower-bounded by 
\[
\comp(X) \geq  \sup_i F'_i(X), \ \textrm{ where } F'_i(X)=1+2\frac{\sum_{j=0}^{i+1}x_j}{x_i}.
\]
This follows easily by considering a hider placed at distance $x_i+\epsilon$, with $\epsilon \rightarrow 0$, 
at the branch that is searched by $X$ in iteration $i$.

In order to prove the theorem, we will make use of a result by Gal~\cite{gal80:search-games} and Schuierer~\cite{schuierer:lower} 
which we state here in a simplified form. This result will allow us to lower bound the supremum of a sequence of functionals by the supremum of simple functionals of a geometric sequence. Given an infinite sequence 
$X=(x_0, x_1, \ldots)$, define $X^{+i}=(x_i,x_{i+1}, \ldots)$ as the suffix of the sequence $X$ starting at $x_i$. 
Recall that $G_\alpha = (1,\alpha,\alpha^2,\ldots)$ is defined to be the geometric sequence in $\alpha$.

\begin{theorem}[\cite{gal80:search-games,schuierer:lower}]\label{thm:limit}
  Let $X = (x_0,x_1,\ldots)$ be a sequence of positive numbers, $r$
  an integer, and $\alpha = \limsup_{n\rightarrow\infty} (x_n)^{1/n}$, for
$\alpha\in \mathbb{R} \cup
\{+\infty\}$.  
Let $F_i$, $i \geq 0$ be a sequence of functionals which satisfy the following properties:
  \begin{enumerate}
  \item $F_i(X)$ only depends on $x_0,x_1, \ldots ,x_{i+r}$,
    \item $F_i(X)$ is continuous for all $x_k>0$, with $0 \leq k
      \leq i + r$, 
    \item $F_i(\lambda X) = F_i(X)$, for all $\lambda > 0$, 
    \item $F_i(X+Y) \leq \max(F_i(X),F_i(Y))$, and
    \item $F_{i+1}(X) \geq F_i(X^{k+1})$, for all $k \geq
      1$,  
  \end{enumerate}
  then
  \[
     \sup_{0 \leq i < \infty} F_i(X)
     \geq  \sup_{0 \leq i < \infty} F_i(G_\alpha).
  \]
\end{theorem}

It is easy to see that both $F_i(X)$ and $F'_i(X)$ satisfy the conditions of Theorem~\ref{thm:limit} (this also follows from 
Example~7.3 in~\cite{searchgames}). Thus, there exists $\alpha$ defined as in the statement of Theorem~\ref{thm:limit} such that 
\begin{eqnarray}
\br(X,\Sigma) &\geq& \sup_i F_i(G_\alpha)=\frac{2\sum_{j=0}^{i-1}\alpha^j+\alpha^{i-2}}{\alpha^{i-1}+\alpha^{i-2}}, \ \textrm{ and} 
 \label{eqn:functionals1} \\
\comp(X,\Sigma) &\geq& \sup_i F'_i(G_\alpha)=1+2\frac{\sum_{j=0}^{i+1}\alpha^j}{\alpha^i}.
\label{eqn:functionals2}
\end{eqnarray}
It is easy to verify that if $\alpha= 1$, then $\br(X,\Sigma),\comp(X,\Sigma) =\infty$. We can thus assume that $\alpha >1$, and 
thus obtain from~\eqref{eqn:functionals1},~\eqref{eqn:functionals2}, after some manipulations, that
\begin{eqnarray}
\br(X,\Sigma) &\geq& \sup_i \frac{2(\alpha^2-\frac{1}{\alpha^{i-2}})+\alpha-1}{\alpha^2-1}= \frac{2\alpha^2 + \alpha- 1}{\alpha^2 -1}, \nonumber 
\ \textrm{ and} \\
\comp(X,\Sigma) &\geq& 1+\sup_i \frac{2\sum_{j=0}^{i+1}\alpha^j}{\alpha^i}= \sup_i 1+2\frac{\alpha^2-\frac{1}{\alpha^i}}{\alpha-1}=1+2\frac{\alpha^2}{\alpha-1}, \nonumber
\end{eqnarray}
which concludes the proof.
\end{proof}

Figure~\ref{fig:Ga_cr_br} depicts this tradeoff, as attained by the search strategy $G_\alpha$ (see also Lemma~\ref{lemma:br.exponential.all}).

Note, however, that although $G_\alpha$, with $\alpha \rightarrow \infty$ has optimal discovery ratio, its competitive ratio is unbounded. Furthermore, strategy
${\tt doubling} \equiv G_2$ has optimal competitive ratio equal to 9, whereas its discovery ratio is equal to 3. This motivates the topic of the 
next section.  


\section{The discovery ratio of competitively optimal strategies}
\label{sec:bijective.comp}
In this section we focus on strategies in $\Sigma_9$, namely the set of competitively optimal strategies. 
For any strategy $X \in \Sigma_9$, it is known that there is an infinite set of linear inequalities that relate its search segments, as shown in the following lemma 
(see, e.g,~\cite{jaillet:online}). For completeness, we include a proof of this lemma.  
\begin{lemma}
\label{lem:CR_9_inequalities}
The strategy $X=(x_0, x_1, x_2, \ldots)$ is in $\Sigma_9$ if and only if its segments satisfy the following inequalities
\[
1 \leq x_0 \leq 4, \quad x_1 \geq 1 \quad \textrm{and } \quad
x_n \leq 3x_{n-1} - \sum_{i=0}^{n-2}x_i, \quad \textrm{for all $n\geq 1$}.
\] 
\end{lemma}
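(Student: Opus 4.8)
The plan is to express $\comp(X)$ as a supremum of normalized costs over all admissible hiders, to locate the worst-case hiders, and then to read the condition $\comp(X)=9$ off as the stated linear inequalities. Because the branches are probed alternately and $x_{i+2}>x_i$, a hider at distance $d$ on branch $b\in\{0,1\}$ is first reached in the earliest iteration $k\equiv b\pmod 2$ with $x_k\ge d$; at that instant the searcher has travelled $2T_{k-1}+d$ (a full out-and-back $2x_j$ for each completed iteration $j<k$, plus the final outward leg of length $d$). Hence the normalized cost of this hider is $1+2T_{k-1}/d$, which is strictly decreasing in $d$. Consequently, among all hiders captured in a fixed iteration $k$ --- those with $d\in(x_{k-2},x_k]$ and $d\ge1$ --- the ratio is largest as $d$ tends to its smallest admissible value $\max(x_{k-2},1)$, so the problem reduces to $\comp(X)=\sup_{k:x_k\ge1}\bigl(1+2T_{k-1}/\max(x_{k-2},1)\bigr)$. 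In words, the worst hiders sit just past the turning points, except near the origin, where the distance-$1$ normalization takes over.

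From this reduction the recurrence follows by elementary algebra. For each $k\ge2$ with $x_{k-2}\ge1$, requiring the ratio to be at most $9$ reads $2T_{k-1}\le8x_{k-2}$, i.e.\ $T_{k-1}\le4x_{k-2}$; expanding $T_{k-1}=x_{k-1}+x_{k-2}+\sum_{j=0}^{k-3}x_j$ and isolating $x_{k-1}$ gives, upon setting $n=k-1$, precisely $x_n\le3x_{n-1}-\sum_{i=0}^{n-2}x_i$ (the case $n=1$ reproducing $x_1\le3x_0$). The bound $x_0\le4$ comes from the single hider at distance $1$ on branch $1$, found in iteration $1$ at ratio $1+2x_0$. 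For the converse I would verify sufficiency directly: assuming $1\le x_0\le4$, $x_1\ge1$, and the recurrence, the monotonicity $x_{i+2}>x_i$ forces $x_i\ge1$ for every $i$, so $\max(x_{k-2},1)=x_{k-2}$ for all $k\ge2$ and the expression for $\comp(X)$ collapses to $\sup_k\bigl(1+2T_{k-1}/x_{k-2}\bigr)$, which the recurrence caps at $9$; the first-probe hiders ($k=0,1$) contribute ratios $1$ and $1+2x_0$, both at most $9$ by $x_0\le4$. Since $\comp\ge9$ holds for every strategy, this yields $\comp(X)=9$, i.e.\ $X\in\Sigma_9$.

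I expect the delicate part to be the lower bounds $x_0\ge1$ and $x_1\ge1$, together with the clean justification of the reduction in the first paragraph. The $|H|\ge1$ normalization is exactly what separates the two regimes in the formula for $\comp(X)$: these lower bounds certify that each branch is probed out to at least the minimum possible hider distance, so that the competitive ratio is governed by the turning-point inequalities rather than by the anomalous distance-$1$ hiders near the origin (for which the condition would instead read $T_{k-1}\le4$ and would have to be tracked separately). Making this interaction between the normalization and the first two segments fully rigorous --- and in particular arguing that attention may be restricted to strategies that probe each branch beyond distance $1$ --- is where I anticipate the main work of the proof to lie; once the lower bounds are secured, all remaining estimates are the elementary manipulations sketched above.
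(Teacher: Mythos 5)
Your argument is correct and is essentially the paper's proof, only more explicit: the paper likewise reduces to hiders placed just past the turning points (citing this reduction as well known, rather than deriving it from the monotonicity of $1+2T_{k-1}/d$ in $d$ as you do), obtains the recurrence from $T_n\le 4x_{n-1}$, and gets $x_0\le 4$ from the distance-$1$ hider on branch $1$. The paper does not even spell out the sufficiency direction, which you handle correctly, including the appeal to the classical bound $\comp(X)\ge 9$ that is needed to pass from $\comp(X)\le 9$ to $X\in\Sigma_9$ under the paper's definition of $\Sigma_9$ as the strategies with competitive ratio exactly $9$.

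The one place where your outlook needs correcting is the part you defer as ``the main work'': the bounds $x_0\ge 1$ and $x_1\ge 1$ are not provable from $9$-competitiveness, and the paper does not prove them --- it dismisses them in one sentence as coming from the assumption that targets lie at distance at least $1$, i.e., as a normalization of the strategy space. Necessity genuinely fails without that normalization: take $x_0=\tfrac12$, $x_1=\tfrac72$, and $x_n=3x_{n-1}-\sum_{i=0}^{n-2}x_i$ for $n\ge 2$ (so $x_2=10$, $x_3=26,\ldots$, and $T_n=4x_{n-1}$). Every admissible hider on branch $0$ with $d\in[1,x_2]$ is found at cost $2(x_0+x_1)+d=8+d\le 9d$; hiders on branch $1$ with $d\in[1,x_1]$ cost $2x_0+d=1+d$; and for hiders just past any later turning point the supremum of the ratio is $1+2T_n/x_{n-1}=9$. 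Hence this strategy lies in $\Sigma_9$, yet it violates both $x_0\ge 1$ and the $n=1$ inequality $x_1\le 3x_0$. So the lower bounds on $x_0,x_1$ must be read exactly as the paper reads them --- a without-loss-of-generality convention excluding segments shorter than the minimum hider distance --- rather than as something to be derived; your instinct that ``attention may be restricted'' to such strategies is the right resolution, and no further work is possible (or needed) there.
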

\begin{proof}
It is well-known that the competitive ratio of $X$ is determined by seeking, for all $n \geq 0$, a target that is placed at distances $x_n+\epsilon$, where $\epsilon \rightarrow 0$, and in the same branch that is searched by $X$ in iteration $n$, namely branch $n \bmod 2$; call this target the {\em $n$-th target}. The cost incurred by $X$ for locating the $(n-1)$-th target, where $n \geq 1$ is equal to
$2(\sum_{i=0}^{n-1} x_i)+x_{n}+x_{n-1}+\epsilon$, whereas the optimal cost is $x_{n-1}+\epsilon$. From the definition of the competitive ratio, and since 
$\epsilon \rightarrow 0$, we obtain that 
\[
2\sum_{i=0}^{n-1} x_i+x_{n}+x_{n-1}\leq 9\cdot x_{n-1} \Rightarrow x_n \leq 3x_{n-1}-\sum_{i=0}^{n-2} x_i.
\] 
Moreover, we can obtain one more inequality that involves $x_0$, by assuming a target placed at distance 1 from $O$ in branch 1. 
Thus, we obtain that $2x_0+1\leq 9$, or, equivalently, $x_0\leq 4$.

Last, note that $x_0,x_1 \geq 1$ from the assumption that the target is at distance at least 1 from the origin. 
\end{proof}

We now define a class of strategies in $\Sigma_9$ as follows. For given $t\in[1,4]$, let $R_t$ denote the strategy whose search segments are determined 
by the linear recurrence
\[
x_0=t, \quad \textrm{and } \quad x_n = 3x_{n-1} - \sum_{i=0}^{n-2}x_i, \quad \textrm{for all $n\geq 1$}.
\]
In words, $R_t$ is such that for every $n>1$, the inequality relating $x_0, \ldots ,x_n$ is tight. The following lemma determines the search lengths of $R_t$ as function of $t$. The lemma also implies that $R_t$ is indeed a valid search strategy, for all $t\in[1,4]$, in that $x_n>x_{n-2}$, for all $n$, and $x_n \rightarrow \infty$, as $n\rightarrow \infty$. 
\begin{lemma}
\label{lem:reach_strategy}
The strategy $R_t$ is defined by the sequence $x_n=t(1+\frac{n}{2})2^n$, for $n \geq 0$. Moreover, 
$T_n(R_t) = t(n+1)2^n$.
\end{lemma}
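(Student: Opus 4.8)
The plan is to eliminate the cumulative sum appearing in the recurrence and reduce the problem to a second-order linear recurrence with constant coefficients, which can then be solved in closed form. Writing the defining relation as $x_n = 3x_{n-1} - T_{n-2}$ (valid for $n \geq 1$, recalling the convention $T_{-1}=0$), I would first record the base case $n=1$, which gives $x_1 = 3x_0 - T_{-1} = 3t$. Then, for $n \geq 2$, I would subtract the relation for index $n-1$ from the one for index $n$; since $T_{n-2} - T_{n-3} = x_{n-2}$, the cumulative sums telescope away and I obtain the clean recurrence
\[
x_n = 4x_{n-1} - 4x_{n-2}, \qquad n \geq 2.
\]

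The characteristic polynomial $\lambda^2 - 4\lambda + 4 = (\lambda-2)^2$ has $2$ as a double root, so the general solution has the form $x_n = (A + Bn)2^n$. Imposing the two initial conditions $x_0 = t$ and $x_1 = 3t$ yields $A = t$ and $B = t/2$, hence $x_n = t(1 + \frac{n}{2})2^n$, which is exactly the claimed formula. Note that the $n=1$ case plays a distinguished role here: it is not produced by the constant-coefficient recurrence (which only governs $n \geq 2$) but instead supplies the second initial condition.

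For the formula for $T_n$ I see two equally short routes. One is to sum the closed form directly, splitting $T_n = t\sum_{i=0}^n 2^i + \frac{t}{2}\sum_{i=0}^n i\,2^i$ and invoking the standard identities $\sum_{i=0}^n 2^i = 2^{n+1}-1$ and $\sum_{i=0}^n i\,2^i = (n-1)2^{n+1}+2$, which after simplification collapse to $T_n = t(n+1)2^n$. The other is to guess $T_n = t(n+1)2^n$ and verify it via the one-line identity $T_n - T_{n-1} = t\,2^{n-1}\big(2(n+1)-n\big) = t(1+\tfrac n2)2^n = x_n$, together with $T_0 = x_0 = t$. Either way, the validity of $R_t$ as a search strategy is then immediate, since $x_n = t(n+2)2^{n-1}$ is strictly increasing in $n$ and tends to infinity, so in particular $x_n > x_{n-2}$ for all $n$. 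The only steps demanding any care are the differencing (one must separate the $n=1$ case and track the convention $T_{-1}=0$) and the arithmetic-geometric summation; beyond these, the argument is entirely routine, so I do not anticipate a genuine obstacle.
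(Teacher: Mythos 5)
Your proof is correct and follows essentially the same route as the paper: eliminate the cumulative sum by differencing to obtain $x_n = 4x_{n-1}-4x_{n-2}$, then solve via the characteristic polynomial $(\xi-2)^2$ with initial conditions $x_0=t$, $x_1=3t$. The only cosmetic difference is that the paper derives the intermediate identity $T_n = 4x_{n-1}$ (from which the formula for $T_n$ falls out for free), whereas you difference the defining relation directly and then compute $T_n$ by summation or verification; both are routine and equivalent.
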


\begin{proof}
The lemma is clearly true for $n\in\{0,1\}$. 
For $n \geq 2$, the equality $x_n=3x_{n-1}-\sum_{i=0}^{n-2}x_i$ implies that $T_n=\sum_{i=0}^nx_i=4x_{n-1}$. Therefore,
\[
T_n - T_{n-1} = 4x_{n-1} - 4x_{n-2} \Rightarrow x_n=4(x_{n-1}-x_{n-2}).
\] 
The characteristic polynomial of the above linear recurrence is $\xi^2 - 4\xi + 4$, with the unique root $\xi=2$.  
Thus, $x_n$ is of the form $x_n = (a+bn)2^n$, for $n \geq 0$, where $a$ and $b$ are determined by the initial conditions $x_0 = t$ and $x_1=3t$.
Summarizing, we obtain that for $n \geq 0$ we have that 
$
x_n=t(1+\frac{n}{2})2^n, \ \textrm{and } T_n = 4x_{n-1} = t(n+1)2^n.
$
\end{proof}
Among all strategies in $R_t$ we are interested, in particular, in the strategy $R_4$. This strategy has some intuitively appealing properties:
It maximizes the search segments in each iteration (see Lemma~\ref{lem:R_4_max}) and minimizes the number of turns required to discover a
certain length (as will be shown in Corollary~\ref{cor:min_turning}). Using the notation of the introduction, we can say that $R_4 \equiv {\tt aggressive}$. In this section we will show that $R_4$ has optimal discovery ratio among all competitively optimal strategies. Let us denote by $\bar x_i$ the search segment in the $i$-th iteration in $R_4$.
\begin{corollary}
\label{cor:R4_xn_and_Tn}
The strategy $R_4$ can be described by the sequence $\bar x_n = (n+2)2^{n+1}$, for $n \geq 0$. Moreover, $T_n(R_4) = (n+1)2^{n+2}$, for $n \geq 0$.
\end{corollary}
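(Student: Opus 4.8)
The plan is to obtain Corollary~\ref{cor:R4_xn_and_Tn} as an immediate specialization of Lemma~\ref{lem:reach_strategy} to the parameter value $t=4$, since $R_4$ is by definition the strategy $R_t$ with $t=4$. So the corollary carries no independent content beyond the lemma; it is simply the instance of the closed-form formulas that will be convenient to reference repeatedly in the remainder of Section~\ref{sec:bijective.comp}.

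First I would recall the two closed forms established in Lemma~\ref{lem:reach_strategy}, namely $x_n = t(1+\frac{n}{2})2^n$ and $T_n(R_t) = t(n+1)2^n$, both valid for every $t \in [1,4]$ and in particular for $t=4$. Writing $\bar x_n$ for the $n$-th search segment of $R_4$, I would then substitute $t=4$ into the segment formula, obtaining $\bar x_n = 4(1+\frac{n}{2})2^n = 2(n+2)2^n = (n+2)2^{n+1}$, which is the claimed expression. Substituting $t=4$ into the partial-sum formula gives $T_n(R_4) = 4(n+1)2^n = (n+1)2^{n+2}$, matching the second claim.

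There is essentially no obstacle here: the only step beyond invoking the lemma is the elementary regrouping $4\cdot\frac{2+n}{2}\cdot 2^n = (n+2)2^{n+1}$, so the sole thing to watch is keeping the powers of $2$ consistent. One could alternatively re-derive both identities from scratch by induction on the defining recurrence of $R_4$ (using $T_n = 4x_{n-1}$ and $x_n = 4(x_{n-1}-x_{n-2})$ exactly as in the proof of Lemma~\ref{lem:reach_strategy}), but that would merely duplicate the lemma's argument and is unnecessary now that the general formula is already available.
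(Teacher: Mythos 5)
Your proposal is correct and matches the paper's intent exactly: the paper states this as an immediate corollary of Lemma~\ref{lem:reach_strategy} (giving no separate proof), and your substitution of $t=4$ into $x_n = t(1+\frac{n}{2})2^n$ and $T_n(R_t)=t(n+1)2^n$ yields precisely $\bar x_n=(n+2)2^{n+1}$ and $T_n(R_4)=(n+1)2^{n+2}$. Nothing further is needed.
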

The following lemma shows that, for any given $n$, the total length discovered by any competitively optimal strategy $X$ at the turning point of the $n$-th iteration cannot exceed the corresponding length of $R_4$. Its proof can also be found in~\cite{jaillet:online}, but we give a different proof using ideas that we will apply later (Lemma~\ref{lem:optimal_cost}).
\begin{lemma}
\label{lem:R_4_max}
For every strategy $X=(x_0, x_1, \ldots)$ with $X \in \Sigma_9$, it holds that $x_n\leq \bar x_n$, for all $n \in \mathbb{N}$, 
where $\bar x_n$ is the search segment in the $n$-th iteration of $R_4$. Hence, in particular, we have $x_n+x_{n-1}\leq \bar x_n+\bar x_{n-1}$, for all $n \in \mathbb{N}$.
\end{lemma}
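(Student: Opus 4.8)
The plan is to prove the single-segment bound $x_n \le \bar x_n$ by induction-free manipulation, after which the stated consequence $x_n + x_{n-1} \le \bar x_n + \bar x_{n-1}$ is immediate by adding two instances. By Lemma~\ref{lem:CR_9_inequalities}, membership $X \in \Sigma_9$ is equivalent to $1 \le x_0 \le 4$, $x_1 \ge 1$, and $x_n \le 3x_{n-1} - \sum_{i=0}^{n-2} x_i$ for all $n \ge 1$. The naive approach---feeding the hypotheses $x_j \le \bar x_j$ for $j<n$ into this inequality---fails at once, and this is the crux of the difficulty: the recurrence has a positive coefficient on $x_{n-1}$ but subtracts $\sum_{i=0}^{n-2} x_i = T_{n-2}$, so upper bounds on the earlier segments push the estimate in the wrong direction. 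Intuitively a large $x_n$ wants both a large $x_{n-1}$ and a small $T_{n-2}$, and these demands conflict; the content of the lemma is precisely that resolving this tension optimally singles out $R_4$. So the main obstacle is organizing the bookkeeping so that both competing quantities are controlled simultaneously.

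To bypass this, I would first rewrite the constraint in terms of the prefix sums. Adding $T_{n-1}$ to $x_n \le 3x_{n-1} - T_{n-2}$ and using $T_{n-1} = T_{n-2} + x_{n-1}$ collapses the sum and yields the clean relation $T_n \le 4x_{n-1} = 4(T_{n-1} - T_{n-2})$ for all $n \ge 1$ (with the convention $T_{-1}=0$). The characteristic polynomial of the associated equality is $(\xi-2)^2$, exactly as in the proof of Lemma~\ref{lem:reach_strategy}, which suggests normalizing by the double root: set $s_n = T_n/2^n$. Dividing through by $2^n$ turns the relation into $s_n \le 2 s_{n-1} - s_{n-2}$, that is, $s_n - s_{n-1} \le s_{n-1} - s_{n-2}$. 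Hence the first differences $\Delta_n := s_n - s_{n-1}$ are \emph{non-increasing} in $n$ — this monotonicity is the right invariant that the subtraction was hiding.

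From here everything telescopes. Writing $\Delta_0 := s_0 = x_0$, the $n=1$ case of the relation reads $\Delta_1 \le s_0 = \Delta_0$, so $\Delta_n \le \Delta_{n-1} \le \cdots \le \Delta_0 = x_0 \le 4$ for every $n$. Summing, $s_n = s_0 + \sum_{k=1}^{n}\Delta_k \le 4 + 4n = 4(n+1)$, which is exactly $T_n(R_4)/2^n$ by Corollary~\ref{cor:R4_xn_and_Tn}. Finally I recover the individual segment while using both bounds at once: since $x_n = T_n - T_{n-1} = 2^{n-1}(2s_n - s_{n-1}) = 2^{n-1}(s_n + \Delta_n)$, substituting $s_n \le 4(n+1)$ together with $\Delta_n \le 4$ gives $x_n \le 2^{n-1}\bigl(4(n+1)+4\bigr) = (n+2)2^{n+1} = \bar x_n$, as required. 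The only routine point I would verify separately is that the derived relation $T_n \le 4x_{n-1}$ and the identity $x_n = 2^{n-1}(s_n+\Delta_n)$ behave correctly at the low indices $n \in \{0,1\}$ under the convention $T_{-1}=0$, checking them directly against $x_0 \le 4$ and $x_1 \le 3x_0$.
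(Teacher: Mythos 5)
Your proof is correct, and it takes a genuinely different route from the paper's. The paper casts the maximization of $x_n$ as a linear program and proves, by induction on $i$, the unrolled bound $x_n \leq (i+2)2^{i-1}x_{n-i} - i2^{i-1}T_{n-i-1}(X)$, concluding at $i=n$ via $x_0 \leq 4$; the induction requires guessing (and then verifying) these coefficient patterns. You instead pass to prefix sums, observing that the competitive constraints collapse to $T_n \leq 4(T_{n-1}-T_{n-2})$, and then normalize by the double characteristic root, $s_n = T_n/2^n$, so that the constraint becomes concavity of $(s_n)$: the differences $\Delta_n = s_n - s_{n-1}$ are non-increasing and bounded by $x_0 \leq 4$. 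Everything then telescopes, giving $T_n(X) \leq 4(n+1)2^n = T_n(R_4)$ as a clean intermediate fact (prefix-sum domination, which the paper never states explicitly), and the segment bound $x_n = 2^{n-1}(s_n + \Delta_n) \leq (n+2)2^{n+1} = \bar x_n$ follows by combining the two bounds; your low-index checks against $T_{-1}=0$ are all consistent. Your argument is shorter and more conceptual --- it is essentially the inequality version of the computation in Lemma~\ref{lem:reach_strategy}, where $T_n = 4x_{n-1}$ holds with equality --- and it avoids coefficient bookkeeping entirely. What the paper's heavier machinery buys is reusability: the authors state that they chose their proof because the same LP-plus-induction template is redeployed in Lemma~\ref{lem:optimal_cost} to lower-bound the discovery cost $d^*(l)$, where the objective $2T_{n-2} + x_{n-1} + l$ is not a pure prefix sum and your telescoping substitution does not obviously adapt. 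So your proof is preferable as a standalone argument for this lemma, while the paper's serves as a warm-up for the harder minimization result.
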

\begin{proof}
For a given $n \geq 0$, let $P_n$ denote the following linear program. 
\begin{align*}
\text{max} \quad &x_n \\
\text{subject to} \quad 
&1\leq x_0 \leq 4,\\
& x_1 \geq 1, \\
&x_i \leq 3x_{i-1} - \sum_{j=0}^{i-2}x_j,\quad  1\leq i \leq n.
\end{align*}
We will show, by induction on $i$, that for all $i\leq n$, 
\[
x_n \leq (i+2)2^{i-1}x_{n-i} - i2^{i-1}T_{n-i-1}(X).
\]
The lemma will then follow, since for $i = n$ we have  
\[
x_n \leq (n+2)2^{n-1}x_0 
  \leq (n+2)2^{n-1}\cdot4
  = (n+2)2^{n+1}
  = \bar x_n,
 \] 
where the last equality is due to Corollary~\ref{cor:R4_xn_and_Tn}.
We will now prove the claim. Note that, the base case, namely $i = 1$, follows directly from the LP constraint. For the induction hypothesis, suppose that for $i \geq 1$, it holds that 
\begin{equation}
x_n \leq (i+2)2^{i-1}x_{n-i} - i2^{i-1}T_{n-i-1}(X).
\label{eq:obj.induction}
\end{equation}
We will show that the claim holds for $i+1$. 
Since 
\begin{equation}
\label{eq:linear.ineq}
x_{n-i} \leq 3x_{n-i-1} - T_{n-i-2}(X),
\end{equation} then
\begin{align*}
		x_n  &\leq (i+2)2^{i-1}(3x_{n-i-1} - T_{n-i-2}(X)) - i2^{i-1}T_{n-i-1}(X) \tag{subst.~\eqref{eq:linear.ineq} in~\eqref{eq:obj.induction}} \\
		  &= (i+2)2^{i-1}(3x_{n-i-1} - T_{n-i-2}(X)) - i2^{i-1}(T_{n-i-2}(X) + x_{n-i-1}) \tag{def. $T_{n-i-1}$}\\
		  &= (i+3)2^ix_{n-i-1} + (i+1)2^iT_{n-i-2}(X), \tag{arranging terms}
\end{align*}
which completes the proof of the claim.
\end{proof}

Given strategy $X$ and $l \in \mathbb{R}^+$, define $m(X,l)$ as the number of turns that $X$ has performed by the time it discovers a total 
length equal to $l$. Also define
\[
m^*(l)=\inf_{X \in \Sigma_9} m(X,l),
\]
that is, $m^*(l)$ is the minimum number of turns that a competitively optimal strategy is required to perform in order to discover length equal to $l$. From the constraint $x_0 \leq 4$, it follows that clearly $m^*(l) = 0$, for $l\leq 4$. 
The following corollary to Lemma~\ref{lem:R_4_max} gives an expression for $m^*(l)$, for general values of $l$.  
\begin{corollary}
For given $l>4$, 
$
m^*(l)=m(R_4,l)=\min \{n \in \mathbb{N}_{\geq 1} : (3n+5)2^n \geq l\}.
$
\label{cor:min_turning}
\end{corollary}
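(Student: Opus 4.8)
The plan is to reduce the statement to two ingredients: a clean combinatorial characterization of $m(X,l)$ in terms of the search segments, and the domination of every competitively optimal strategy by $R_4$ that Lemma~\ref{lem:R_4_max} already provides.

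First I would make precise the bookkeeping of ``discovered length versus turns''. I would observe that the searcher completes its $n$-th turn exactly when it finishes iteration $n-1$, and that at that instant the aggregate discovered length equals $x_{n-1}+x_{n-2}$: among the completed iterations $0,\ldots,n-1$, the largest even index and the largest odd index are $n-1$ and $n-2$ in one order or the other, so the two branches have been explored to depths $x_{n-1}$ and $x_{n-2}$ regardless of the parity of $n$ (recall $x_{-1}=x_{-2}=0$). Since the discovered length grows continuously and monotonically within each iteration, from $x_{k-1}+x_{k-2}$ at the start of iteration $k$ up to $x_{k}+x_{k-1}$ at its turning point, the value $l$ is first attained during the iteration $k$ for which $x_{k-1}+x_{k-2}<l\le x_{k}+x_{k-1}$, and at that moment exactly $k$ turns have been performed. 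This yields the characterization
\[
m(X,l)=\min\{n\in\mathbb{N}:x_n+x_{n-1}\ge l\},
\]
valid for every $X=(x_0,x_1,\ldots)\in\Sigma_9$.

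Next I would establish $m^*(l)=m(R_4,l)$ by a monotonicity argument. By Lemma~\ref{lem:R_4_max} we have $x_n+x_{n-1}\le \bar x_n+\bar x_{n-1}$ for all $n$ and all $X\in\Sigma_9$. Hence every index $n$ satisfying $x_n+x_{n-1}\ge l$ also satisfies $\bar x_n+\bar x_{n-1}\ge l$, so the set over which the minimum defining $m(R_4,l)$ is taken contains the corresponding set for $X$; consequently $m(R_4,l)\le m(X,l)$ for every competitively optimal $X$. Taking the infimum over $X\in\Sigma_9$ gives $m(R_4,l)\le m^*(l)$, while $R_4\in\Sigma_9$ gives the reverse inequality $m^*(l)\le m(R_4,l)$, whence equality.

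Finally I would compute $m(R_4,l)$ explicitly. Using Corollary~\ref{cor:R4_xn_and_Tn}, for $n\ge 1$,
\[
\bar x_n+\bar x_{n-1}=(n+2)2^{n+1}+(n+1)2^{n}=(3n+5)2^{n},
\]
whereas for $n=0$ the discovered length at the first turning point is merely $\bar x_0=4$ (since $\bar x_{-1}=0$, the closed-form $(3n+5)2^n$ overcounts at $n=0$). Thus for any $l>4$ the index $n=0$ fails the threshold $\bar x_n+\bar x_{n-1}\ge l$, so the minimum in the characterization is necessarily attained at some $n\ge 1$, on which range $(3n+5)2^{n}$ is strictly increasing, giving
\[
m(R_4,l)=\min\{n\in\mathbb{N}_{\ge 1}:(3n+5)2^{n}\ge l\},
\]
as claimed. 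I expect the only delicate point to be the counting step: verifying the ``$x_{n-1}+x_{n-2}$ after $n$ turns'' identity for both parities of $n$, and handling the boundary $n=0$ so that the restriction to $n\ge 1$ (rather than $n\ge 0$) emerges precisely from the hypothesis $l>4$. Everything else follows directly from Lemma~\ref{lem:R_4_max} and Corollary~\ref{cor:R4_xn_and_Tn}.
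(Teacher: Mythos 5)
Your proof is correct and takes essentially the same route as the paper's: Lemma~\ref{lem:R_4_max} supplies the domination of every $X \in \Sigma_9$ by $R_4$ at each turning point, and Corollary~\ref{cor:R4_xn_and_Tn} gives the closed form $\bar x_n + \bar x_{n-1} = (3n+5)2^n$. Your write-up is simply more explicit about the turns-versus-discovered-length bookkeeping and the $n=0$ boundary case (where $l>4$ forces the minimum to be attained at some $n\geq 1$), details the paper leaves implicit.
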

\begin{proof}
From Lemma~\ref{lem:R_4_max}, the total length discovered by any $X \in \Sigma_9$ at the turning point of the $n$-th iteration cannot exceed $\bar x_n +\bar x_{n-1}$ for $n \geq 1$, which implies that $m^*(l) = n$, if $l \in (\bar x_{n-1} + \bar x_{n-2}, \bar x_{n} + \bar x_{n-1}]$ for $n \geq 1$. In other words,
\[
m^*(l) = \min \{n \in \mathbb{N}_{\geq 1} : \bar x_n + \bar x_{n-1} \geq l\}.
\]
From Corollary~\ref{cor:R4_xn_and_Tn}, we have $\bar x_n = (n+2)2^{n+1}$, for $n \geq 0$. Hence,
\[
m^*(l) = \min \{n \in \mathbb{N}_{\geq 1} : (3n+5)2^n \geq l\}.
\]
\end{proof}

The following lemma is a central technical result that is instrumental in establishing the bounds on the discovery ratio. 
For a given $l \in \mathbb{R}^+$, define 
\[
d^*(l)=\inf_{X \in \Sigma_9} D(X,l). 
\]
In words, $d^*(l)$ is the minimum cost at which a competitively optimal strategy can discover a length equal to $l$.
Trivially, $d^*(l)=l$ if $l\leq4$. Lemma~\ref{lem:optimal_cost} gives an expression of $d^*(l)$ for $l>4$ in terms of $m^*(l)$;
it also shows that there exists a $t \in (1,4]$ such that the strategy $R_t$ attains this minimum cost. 

We first give some motivation behind the purpose of the lemma. When considering general strategies in $\Sigma$, we used a lower
bound on the cost for discovering a length $l$ as given by~\eqref{eq:searched.opt}, and which corresponds to a strategy that 
never turns. However, this lower bound is very weak when 
one considers strategies in $\Sigma_9$. This is because a competitive strategy needs to turn sufficiently often, which affects considerably
the discovery costs. 

We also give some intuition about the proof. We show how to model the question by means of a linear program. Using the constraints of the LP,
we first obtain a lower bound on its objective in terms of the parameters $l$ and $m^*(l)$. In this process, we also obtain a lower bound on
the first segment of the strategy ($x_0$); this is denoted by $t$ in the proof. In the next step, we show that the strategy $R_t$ has discovery cost that matches the lower bound on the objective, which suffices to prove the result.
\begin{lemma}
\label{lem:optimal_cost}
For $l > 4$, it holds 
\[
d^*(l)= D(R_t,l) = l\cdot \frac{6m^*(l)+4}{3m^*(l)+5}, \quad 
\textrm{ where $t = l\cdot \frac{2^{2-m^*(l)}}{3m^*(l)+5} \in (1,4]$}.
\]
\end{lemma}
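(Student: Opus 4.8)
The plan is to recast $d^*(l)$ as the optimum of a linear program over the polytope of Lemma~\ref{lem:CR_9_inequalities}, bound its objective from below by a short induction in the spirit of Lemma~\ref{lem:R_4_max}, and then certify tightness with a concrete $R_t$. Fix $l>4$ and write $m:=m^*(l)$. Every $X=(x_0,x_1,\ldots)\in\Sigma_9$ discovers the aggregate length $l$ during the outward part of a unique iteration $n$, i.e. $l\in(x_{n-1}+x_{n-2},\,x_n+x_{n-1}]$; by definition the number of turns performed is $m(X,l)=n\ge m^*(l)$, so $n\ge m$. Applying~\eqref{eq:searched.algo} with $\delta=l-x_{n-1}-x_{n-2}$ gives the clean identity $D(X,l)=2T_{n-1}(X)-x_{n-1}+l$. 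Thus $d^*(l)$ is the infimum, over $n\ge m$ and over feasible $(x_0,\ldots,x_n)$, of $2T_{n-1}-x_{n-1}+l$ subject to the competitiveness constraints $x_j\le 3x_{j-1}-T_{j-2}$ and the reachability constraint $x_n+x_{n-1}\ge l$.

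The crux is a single auxiliary inequality relating the last segment to the cumulative length, which I would prove by induction on $k$, mirroring the proof of Lemma~\ref{lem:R_4_max}: for every $X\in\Sigma_9$ and every $k\ge 0$,
\[
x_k\le \frac{k+2}{2(k+1)}\,T_k,\qquad\text{equivalently}\qquad T_k\ge \frac{2(k+1)}{k+2}\,x_k,
\]
with equality for every $R_t$. The base case $k=0$ is $x_0=T_0$. For the step, the target inequality rearranges to $k\,x_k\le (k+2)T_{k-1}$; substituting $x_k\le 4x_{k-1}-T_{k-1}$ (which is the constraint $x_k\le 3x_{k-1}-T_{k-2}$ rewritten via $T_{k-2}=T_{k-1}-x_{k-1}$) reduces it precisely to $2k\,x_{k-1}\le(k+1)T_{k-1}$, the induction hypothesis.

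With this inequality (taken at $k=n-1$, so $x_{n-1}\le\frac{n+1}{2n}T_{n-1}$) the lower bound follows by two substitutions. Combining reachability with $x_n\le 3x_{n-1}-T_{n-2}$ yields $l\le x_n+x_{n-1}\le 5x_{n-1}-T_{n-1}$, and feeding in $x_{n-1}\le\frac{n+1}{2n}T_{n-1}$ gives $T_{n-1}\ge\frac{2n}{3n+5}\,l$. Likewise $2T_{n-1}-x_{n-1}\ge\frac{3n-1}{2n}T_{n-1}$, so that
\[
D(X,l)=2T_{n-1}-x_{n-1}+l\ \ge\ \frac{3n-1}{2n}\cdot\frac{2n}{3n+5}\,l+l\ =\ l\cdot\frac{6n+4}{3n+5}\ \ge\ l\cdot\frac{6m+4}{3m+5},
\]
where the last inequality uses that $\tfrac{6n+4}{3n+5}$ is increasing in $n$ together with $n\ge m$. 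This is the reduction that removes the dependence on the particular discovery iteration and pins the bound to $m=m^*(l)$.

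For tightness I would exhibit $R_t$ with $t=l\cdot\frac{2^{2-m}}{3m+5}$. By Lemma~\ref{lem:reach_strategy}/Corollary~\ref{cor:R4_xn_and_Tn} one computes $x_m+x_{m-1}=t\,2^{m-2}(3m+5)=l$, so $R_t$ discovers $l$ exactly at the turning point of its $m$-th iteration, at cost $2T_{m-1}+x_m=t\,2^{m-1}(3m+2)=l\cdot\frac{6m+4}{3m+5}$, matching the lower bound; hence the infimum is attained. It remains to check $t\in(1,4]$: the bound $t\le 4$ is exactly $l\le(3m+5)2^m$, which is the defining inequality of $m$ in Corollary~\ref{cor:min_turning}, while $t>1$ is equivalent to $l>(3m+5)2^{m-2}$, which for $m=1$ is the hypothesis $l>4$ and for $m\ge2$ follows from the minimality of $m$ (giving $l>(3m+2)2^{m-1}=(6m+4)2^{m-2}>(3m+5)2^{m-2}$); moreover $R_t\in\Sigma_9$ by Lemma~\ref{lem:reach_strategy}. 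The main obstacle is isolating and proving the auxiliary ratio inequality and recognizing that it is exactly what converts the (wrong-direction) reachability bound $l\le 5x_{n-1}-T_{n-1}$ into a usable lower bound on $T_{n-1}$; once that is in place, the objective identity, the monotonicity reduction over $n\ge m$, and the $R_t$ calculation are all routine.
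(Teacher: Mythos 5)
Your proof is correct, and although it follows the paper's overall plan (recast $d^*(l)$ as a linear program over the constraints of Lemma~\ref{lem:CR_9_inequalities}, prove a lower bound on the objective by induction, then certify tightness with the explicit strategy $R_t$ --- your tightness computation and the verification that $t\in(1,4]$ are essentially identical to the paper's), your lower-bound argument takes a genuinely different route. The paper runs a backward ``peeling'' induction on the LP itself, proving the compound claim $x_{n-i} \geq \frac{2^{2-i}}{3i+5}l + \frac{3i-1}{3i+5}T_{n-i-1}$ together with $\mathrm{Obj} \geq \frac{6i+4}{3i+5}l + \frac{9\cdot 2^i}{3i+5}T_{n-i-1}$, so the target length $l$ is threaded through every step of the induction. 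You instead isolate a standalone structural inequality, $x_k \leq \frac{k+2}{2(k+1)}T_k$ for every $X\in\Sigma_9$ and every $k$, prove it by a short forward induction using only the competitiveness constraint rewritten as $x_k\leq 4x_{k-1}-T_{k-1}$ (your reduction of the step to the hypothesis $2kx_{k-1}\leq(k+1)T_{k-1}$ checks out), and then obtain $D(X,l)\geq \frac{6n+4}{3n+5}l$ in two substitutions, finishing with monotonicity of $\frac{6n+4}{3n+5}$ in $n$ and $n\geq m^*(l)$. Your version is more modular and arguably cleaner: the key inequality makes no reference to $l$ or to the LP, and it transparently expresses the structural fact that a $9$-competitive strategy's last segment can constitute at most a $\frac{k+2}{2(k+1)}$ fraction of the total length searched, with equality exactly for the strategies $R_t$. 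What the paper's heavier induction buys is a simultaneous lower bound $x_0\geq \frac{2^{2-n}}{3n+5}l$ on the first segment, which is how the paper motivates the formula for $t$; but since both you and the paper ultimately verify $t\in(1,4]$ directly from Corollary~\ref{cor:min_turning}, nothing essential is lost. One minor attribution nit: $R_t\in\Sigma_9$ follows from Lemma~\ref{lem:CR_9_inequalities} (its defining recurrence makes every constraint tight, and $x_0=t\in(1,4]$) combined with the validity of $R_t$ as a search strategy given by Lemma~\ref{lem:reach_strategy}, not from Lemma~\ref{lem:reach_strategy} alone.
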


\begin{proof}
Let $X=(x_0, x_1, \ldots) \in \Sigma_9$ denote the strategy which minimizes the quantity $D(X,l)$. Then there must exist a smallest $n \geq m^*(l)$ such that the searcher discovers a total length $l$ during the $n$-th iteration. More precisely, suppose that this happens when the searcher is at branch $n \bmod 2$, and at some position $p$ (i.e., distance from $O$), with $p\in(x_{n-2}, x_n]$. Then we have $x_{n-1} + p = l$, and
\[
d^*(l) = D(X, l) =  2\sum_{i=0}^{n-1}x_i + p = 2\sum_{i=0}^{n-1}x_i + (l-x_{n-1}) = 2\sum_{i=0}^{n-2}x_i + x_{n-1} +l.
\]
Therefore, $d^*(l)$ is the objective of the following linear program.
\begin{align*}
\text{min} \quad &2\sum_{i=0}^{n-2}x_i + x_{n-1} + l&\\
\text{subject to} \quad 
&x_n + x_{n-1} \geq l,&\\
&1 \leq x_0 \leq 4,&\\
& x_{i-2} \leq x_i, &\quad i \in [2, n]\\
&1 \leq x_i \leq 3x_{i-1} - \sum_{j=0}^{i-2} x_j, &\quad i \in [1,n].
\end{align*}
Recall that $n \geq m^*(l)$ is a fixed integer.
Let Obj denote the objective value of the linear program. We claim that, for $1 \leq i \leq n$,
\[
x_{n-i} \geq \frac{2^{2-i}}{3i+5}l + \frac{3i-1}{3i+5}T_{n-i-1} \quad \textrm{and} \quad
\textrm{Obj} \geq \frac{6i+4}{3i+5}l + \frac{9\cdot2^i}{3i+5}T_{n-i-1}.
\]
The claim provides a lower bound of the objective, since for $i=n$ it implies that 
\[
x_0 \geq \frac{2^{2-n}}{3n+5}l \quad \textrm{and} \quad
\textrm{Obj} \geq \frac{6n+4}{3n+5}l \geq \frac{6m^*(l)+4}{3m^*(l)+5}l,
\]
where the last inequality follows from the fact $n \geq m^*(l)$. We will argue later that this lower bound is tight, 

First, we prove the claim, by induction on $i$, for all $i\leq n$. We first show the base case, namely $i=1$.  
Since $x_n \leq 3x_{n-1} - T_{n-2}$ and $x_n + x_{n-1} \geq l$, it follows that  
\[
x_{n-1} \geq l - x_n  \geq l - (3x_{n-1} - T_{n-2}) \Rightarrow
x_{n-1} \geq \frac{l}{4} + \frac{T_{n-2}}{4}, \quad \textrm{hence}
\]
\[
\textrm{Obj} = l + 2T_{n-2} + x_{n-1}
	\geq l + 2T_{n-2} + \frac{l}{4} + \frac{T_{n-2}}{4}
	= \frac{5}{4}l + \frac{9}{4}T_{n-2},
\]
thus the base case holds. For the induction step,  suppose that 
\[
x_{n-i} \geq \frac{2^{2-i}}{3i+5}l + \frac{3i-1}{3i+5}T_{n-i-1} \quad \textrm{and} \quad 
\textrm{Obj} \geq \frac{6i+4}{3i+5}l + \frac{9\cdot2^i}{3i+5}T_{n-i-1}.
\]
Then,
\begin{align*}
3x_{n-i-1}-T_{n-i-2} &\geq x_{n-i} \tag{by LP constraint} \\
					 &\geq \frac{2^{2-i}}{3i+5}l + \frac{3i-1}{3i+5}T_{n-i-1} \tag{ind. hyp.}\\
					 &= \frac{2^{2-i}}{3i+5}l + \frac{3i-1}{3i+5}(T_{n-i-2} + x_{n-i-1}) \tag{def. $T_{n-i-1}$}
\end{align*}
By rearranging terms in the above inequality we obtain
\begin{align*}
(3-\frac{3i-1}{3i+5})x_{n-i-1} &\geq \frac{2^{2-i}}{3i+5}l  + (1+\frac{3i-1}{3i+5})T_{n-i-2} \Rightarrow \\
\frac{6i+16}{3i+5}x_{n-i-1} &\geq  \frac{2^{2-i}}{3i+5}l + \frac{6i+4}{3i+5}T_{n-i-2}\Rightarrow 
x_{n-i-1} \geq \frac{2^{1-i}}{3i+8}l + \frac{3i+2}{3i+8}T_{n-i-2},
\end{align*}
and 
\begin{align*}
\textrm{Obj} &\geq \frac{6i+4}{3i+5}l + \frac{9\cdot2^i}{3i+5}T_{n-i-1} \tag{ind. hyp.}\\
	&= \frac{6i+4}{3i+5}l + \frac{9\cdot2^i}{3i+5}(T_{n-i-2} + x_{n-i-1}) \tag{def. $T_{n-i-1}$}\\
	&\geq \frac{6i+4}{3i+5}l + \frac{9\cdot2^i}{3i+5}T_{n-i-2} + \frac{9\cdot2^i}{3i+5}(\frac{2^{1-i}}{3i+8}l + \frac{3i+2}{3i+8}T_{n-i-2}) 
	\tag{ind. hyp.}\\
	&= \frac{6i+10}{3i+8}l + \frac{9\cdot2^{i+1}}{3i+8}T_{n-i-2}.
\end{align*}
This concludes the proof of the claim, which settles the lower bound on $d^*(l)$.  It remains to show that this bound is tight. Consider the strategy $R_t$, with $t = \frac{2^{2-m^*(l)}}{3m^*(l)+5}l$.
In what follows we will show that $R_t$ is a feasible solution of the LP, and that $D(R_t, l) = \frac{6m^*(l)+4}{3m^*(l)+5}l$. 

First, we show that $t\in (1,4]$. For the upper bound, from Corollary~\ref{cor:min_turning}, we have $(3m^*(l)+5)2^{m^*(l)} \geq l$, which implies that
\[
1 \geq l\cdot \frac{2^{-m^*(l)}}{3m^*(l)+5} \Rightarrow 4 \geq l\cdot \frac{2^{2-m^*(l)}}{3m^*(l)+5} \Rightarrow 4 \geq t.
\]
In order to show that $t>1$, consider first the case $l \in (4, 5]$. Then $m^*(l) = 1$, which implies that 
\[
t = \frac{2^{2-m^*(l)}}{3m^*(l)+5}l = \frac{l}{4} \geq 1.
\]
Moreover, if $l > 5$, by Corollary~\ref{cor:min_turning}, $m^*(l)$ is the smallest integer solution of the inequality $(3n+5)2^n \geq l$, then $(3m^*(l)+2)2^{m^*(l)-1} < l$, hence
\begin{align*}
t &= \frac{2^{2-m^*(l)}}{3m^*(l)+5}l = \frac{4l}{(3m^*(l)+5)2^{m^*(l)}} = \frac{2l}{(3m^*(l)+2)2^{m^*(l)-1} \cdot \frac{3m^*(l)+5}{3m^*(l)+2}}\\
  &> \frac{2l}{l \cdot \frac{3m^*(l)+5}{3m^*(l)+2}} = \frac{6m^*(l)+4}{3m^*(l)+5} > 1.
\end{align*}
The last inequality holds since we have $m^*(l) \geq 1$, for $l > 5$. This concludes that $t\in (1,4]$, and $R_t$ is a feasible solution of the LP since $R_t$ satisfies all other constraints by its definition. 

It remains thus to show that $D(R_t, l) = \frac{6m^*(l)+4}{3m^*(l)+5}l$. By Lemma~\ref{lem:reach_strategy}, we have 
\begin{align*}
x_{m^*(l)} + x_{m^*(l)-1} &= t\left(1+\frac{m^*(l)}{2}\right)2^{m^*(l)} + t\left(1+\frac{m^*(l)-1}{2}\right)2^{m^*(l)-1}\\
		&= t\cdot 2^{m^*(l)}\cdot \frac{3m^*(l)+ 5}{4} = \frac{2^{2-m^*(l)}}{3m^*(l)+5}l \cdot 2^{m^*(l)}\cdot \frac{3m^*(l)+ 5}{4} = l.
\end{align*}
Then $R_t$ has exactly discovered a total length $l$ right before the $m^*(l)$-th turn. Hence,
\begin{align*}
D(R_t, l) &= 2T_{m^*(l)-2} + x_{m^*(l)-1} + l \\
			 &= t\cdot \left(m^*(l)-1\right)2^{m^*(l)-1} + t\cdot \left(1+\frac{m^*(l)-1}{2}\right)2^{m^*(l)-1} + l \tag{by Lemma~\ref{lem:reach_strategy}}\\
			 &= t\cdot\frac{(3m^*(l)-1)2^{m^*(l)}}{4} + l \tag{arranging terms}\\
			 &= \frac{2^{2-m^*(l)}}{3m^*(l)+5}l \cdot\frac{(3m^*(l)-1)2^{m^*(l)}}{4} + l \tag{substituting $t$}\\
			 &= \left(\frac{3m^*(l)-1}{3m^*(l)+5} + 1\right)\cdot l = \frac{6m^*(l)+4}{3m^*(l)+5}\cdot l 			\tag{arranging terms}.
\end{align*}
This concludes the proof of the lemma.
\end{proof}

We are now ready to prove the main results of this section. Recall that for any two strategies $X,Y$, $\br(X,Y)$ is given by~\eqref{eq:br.all.1}.
Combining with~\eqref{eq:searched.algo}, as well as with the fact that for $Y \in \Sigma_9$, we have that $D(Y,l)\geq d^*(l)$, (from the definition 
of $d^*$), we obtain that 
\begin{equation}
\br(X,\Sigma_9) = \sup_{i \in \mathbb N^*} \sup_{\delta \in (0,x_{i}-x_{i-2}]} F_i(X,\delta), \quad \textrm{where } 
F_i(X,\delta)=\frac{2\sum_{j=0}^{i-1} x_j+x_{i-2}+\delta}{d^*(x_{i-1}+x_{i-2}+\delta)}.
\label{eq:general.9} 
\end{equation}

Recall that for the strategy $R_4=(\bar x_0, \bar x_1, \ldots)$, its segments $\bar x_i$ are given in Corollary~\ref{cor:R4_xn_and_Tn}.

\begin{theorem}
For the strategy $R_4$ it holds that $\br(R_4, \Sigma_9)=8/5$. 
\label{thm:R4.optimal}
\end{theorem}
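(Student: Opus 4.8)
The plan is to start from the closed-form expression~\eqref{eq:general.9} for $\br(R_4,\Sigma_9)$, which reduces the problem to analyzing, for each $i\in\mathbb N^*$, the single-variable function
\[
F_i(R_4,\delta)=\frac{2\sum_{j=0}^{i-1}\bar x_j+\bar x_{i-2}+\delta}{d^*(\bar x_{i-1}+\bar x_{i-2}+\delta)},\qquad \delta\in(0,\bar x_i-\bar x_{i-2}],
\]
and then taking the supremum over $i$ and $\delta$. The first step is to rewrite the numerator using Corollary~\ref{cor:R4_xn_and_Tn}: since $T_{i-1}(R_4)=i\,2^{i+1}$ and $\bar x_{i-2}=i\,2^{i-1}$, for $i\ge 2$ the numerator equals $2\,i\,2^{i+1}+i\,2^{i-1}+\delta=9i\,2^{i-1}+\delta$. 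The case $i=1$ must be treated separately because $\bar x_{-1}=0$ breaks the closed form; there the numerator is $2\bar x_0+\delta=8+\delta$.

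The crucial step is to evaluate the denominator $d^*(\bar x_{i-1}+\bar x_{i-2}+\delta)$ by identifying $m^*$ of its argument. Writing $l=\bar x_{i-1}+\bar x_{i-2}+\delta$, one computes $\bar x_{i-1}+\bar x_{i-2}=2^{i-1}(3i+2)$ and $\bar x_i+\bar x_{i-1}=2^i(3i+5)$, so that as $\delta$ ranges over $(0,\bar x_i-\bar x_{i-2}]$ the value $l$ ranges over exactly the interval $(\bar x_{i-1}+\bar x_{i-2},\ \bar x_i+\bar x_{i-1}]$. By Corollary~\ref{cor:min_turning} this is precisely the set of lengths with $m^*(l)=i$. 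Lemma~\ref{lem:optimal_cost} then yields $d^*(l)=l\cdot\frac{6i+4}{3i+5}$ uniformly throughout this range, and substituting gives, for $i\ge 2$,
\[
F_i(R_4,\delta)=\frac{(9i\,2^{i-1}+\delta)(3i+5)}{\bigl(2^{i-1}(3i+2)+\delta\bigr)(6i+4)}.
\]

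Next I would resolve the supremum over $\delta$. The only $\delta$-dependence sits in the factor $\frac{c_1+\delta}{c_2+\delta}$ with $c_1=9i\,2^{i-1}$ and $c_2=2^{i-1}(3i+2)$; since $9i>3i+2$ for all $i\ge 1$ we have $c_1>c_2$, so this ratio is strictly decreasing in $\delta$ and its supremum is attained in the limit $\delta\to 0^+$. Evaluating that limit and using $6i+4=2(3i+2)$ gives $\sup_\delta F_i(R_4,\delta)=\frac{9i(3i+5)}{2(3i+2)^2}$ for $i\ge 2$, while the separate computation for $i=1$ yields $\sup_\delta F_1(R_4,\delta)=\tfrac85$. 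The final step compares these quantities across $i$: the inequality $\frac{9i(3i+5)}{2(3i+2)^2}<\tfrac85$ is equivalent to $9i^2-33i+64>0$, which holds for every real $i$ because its discriminant $33^2-4\cdot9\cdot64=-1215$ is negative. Hence the $i=1$ term dominates and $\br(R_4,\Sigma_9)=\tfrac85$.

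I expect the main obstacle to be the denominator computation, that is, verifying that the discovery-length window $l\in(\bar x_{i-1}+\bar x_{i-2},\bar x_i+\bar x_{i-1}]$ traversed during iteration $i$ of $R_4$ coincides exactly with the range on which $m^*(l)=i$, so that the clean formula for $d^*$ from Lemma~\ref{lem:optimal_cost} applies uniformly and $F_i$ becomes a genuine rational function of $\delta$. The boundary index $i=1$, where $\bar x_{-1}=0$ invalidates the closed form, also requires separate handling, and it is exactly this case that realizes the supremum $8/5$.
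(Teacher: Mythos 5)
Your proof is correct and follows essentially the same route as the paper: the same reduction via~\eqref{eq:general.9}, the same identification of the window $(\bar x_{i-1}+\bar x_{i-2},\,\bar x_i+\bar x_{i-1}]$ on which $m^*(l)=i$ so that Lemma~\ref{lem:optimal_cost} gives $d^*(l)=l\cdot\frac{6i+4}{3i+5}$, the same separate treatment of $i=1$ yielding $\frac{8}{5}$, and the same observation that each $F_i$ is decreasing in $\delta$, so its supremum is the limit as $\delta\rightarrow 0^+$. The one place where you genuinely diverge is the final comparison across $i\geq 2$: the paper asserts that $\sup_{i\geq 2}\frac{9i(3i+5)}{(3i+2)(6i+4)}$ is attained at $i=2$ and equals $\frac{99}{64}$, whereas you prove the uniform bound $\frac{9i(3i+5)}{2(3i+2)^2}<\frac{8}{5}$ for all $i$ by checking that $9i^2-33i+64$ has negative discriminant. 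Your version is preferable here: the paper's displayed equality is actually false, since the quantity increases from $\frac{99}{64}\approx 1.547$ at $i=2$ to $\frac{189}{121}\approx 1.562$ at $i=3$ before decreasing toward $\frac{3}{2}$; the conclusion $<\frac{8}{5}$ survives, but your discriminant argument establishes it cleanly for every $i$ without needing to locate the maximizer, and thus quietly repairs this slip.
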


\begin{proof}
We will express the discovery ratio using~\eqref{eq:general.9}. For $i=1$, and $\delta \in (0,\bar x_1]$, we have that 
\[
F_1(R_4, \delta) = \frac{2\bar x_0 + \delta}{d^*(\bar x_0 + \delta)}=
				 \frac{8+\delta}{d^*(4+\delta)}.
\]
From Lemma~\ref{lem:optimal_cost}, $d^*(4+\delta) = (4+\delta)\cdot\frac{6\cdot1+4}{3\cdot1+5} = \frac{5(4+\delta)}{4}$; this is because
$1 \leq m^*(4+\delta)\leq m^*(16)=1$. Then,
\begin{equation}
F_1(R_4, \delta) = \frac{8+\delta}{\frac{5(4+\delta)}{4}} = \frac{32+4\delta}{20+5\delta}, \ \textrm{hence } \sup_{\delta \in (0,\bar x_1]} F_1(R_4,\delta)=\frac{8}{5}.
\label{eq:R4:F1}
\end{equation}
For given $i \geq 2$, and $\delta \in (0,\bar x_{i}-\bar x_{i-2}]$, we have
\[
F_i(R_4, \delta) = \frac{2T_{i-1} + \bar x_{i-2} + \delta}{d^*(\bar x_{i-1}+\bar x_{i-2}+\delta)},
\]
where $T_{i-1}$ is given by Corollary~\ref{cor:R4_xn_and_Tn}. Moreover, from Lemma~\ref{lem:optimal_cost} we have that 
\[
d^*(\bar x_{i-1}+\bar x_{i-2}+\delta) =(\bar x_{i-1}+\bar x_{i-2}+\delta) \cdot \frac{6m^*(\bar x_{i-1}+\bar x_{i-2}+\delta)+4}{3m^*(\bar x_{i-1}+\bar x_{i-2}+\delta)+5}
= (\bar x_{i-1}+\bar x_{i-2}+\delta) \cdot \frac{6i+4}{3i+5},
\]
where the last equality follows from the fact that $m^*(\bar x_{i-1}+\bar x_{i-2}+\delta)=i$. This is because
\[
i \leq m^*(\bar x_{i-1}+\bar x_{i-2}+\delta)\leq m^*(\bar x_{i-1}+\bar x_{i-2}+\bar x_{i}-\bar x_{i-2})=m^*(\bar x_i+\bar x_{i-1})=i.
\]
Substituting with the values of the search segments as well as $T_{i-1}$, we obtain that
\[
F_i(R_4, \delta) 
				= \frac{i\cdot2^{i+2} + i\cdot2^{i-1}+\delta}{((i+1)2^i+i\cdot2^{i-1}+\delta)\cdot \frac{6i+4}{3i+5}}
				= \frac{9i\cdot2^{i-1}+\delta}{((3i+2)2^{i-1}+\delta)\cdot \frac{6i+4}{3i+5}}.
\]
Since 
\[
\frac{\partial F_i(R_4, \delta)}{\partial \delta} = -\frac{2^{i+1}(3i-1)(3i+5)}{(3i+2)(2^n(3i+2)+2\delta)^2} \leq 0,
\]
then $F_i(R_4, \delta)$ is monotone decreasing in $\delta$. Thus
\[
\sup_{\delta \in (0,\bar x_{i}-\bar x_{i-2}]} F_i(R_4, \delta) = \frac{9i\cdot2^{i-1}}{((3i+2)2^{i-1})\cdot \frac{6i+4}{3i+5}}
													= \frac{9i(3i+5)}{(3i+2)(6i+4)},
\]
and then
\begin{equation}
\sup_{i \in \mathbb N_{i \geq 2}} \sup_{\delta \in (0,\bar x_{i}-\bar x_{i-2}]} F_i(R_4, \delta) = \frac{(9\cdot2)(3\cdot2+5)}{(3\cdot 2+2)(6\cdot 2+4)} 
= \frac{99}{64}< \frac{8}{5}.
\label{eq:R4:Fi}
\end{equation}
Combining~\eqref{eq:general.9},~\eqref{eq:R4:F1} and~\eqref{eq:R4:Fi} yields the proof of the theorem. 
\end{proof}

The following theorem shows that $R_4$ has optimal discovery ratio among all competitively optimal strategies. 
\begin{theorem}
For every strategy $X \in \Sigma_9$, we have $\br(X,\Sigma_9) \geq \frac{8}{5}$.
\label{thm:bij.lower}
\end{theorem}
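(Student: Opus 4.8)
The plan is to lower-bound $\br(X,\Sigma_9)$ using the characterization in~\eqref{eq:general.9}, and to show that the single term corresponding to the \emph{first} iteration, $i=1$, already forces the ratio up to $8/5$. Since $\br(X,\Sigma_9)=\sup_{i\in\mathbb N^*}\sup_{\delta\in(0,x_i-x_{i-2}]}F_i(X,\delta)$, I would discard every term with $i\geq 2$ and retain only $\sup_{\delta\in(0,x_1]}F_1(X,\delta)$, where (using $x_{-1}=0$) we have $F_1(X,\delta)=\frac{2x_0+\delta}{d^*(x_0+\delta)}$. This reduces the whole statement to understanding $F_1$ on a hider placed just beyond the first search segment $x_0$. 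By Lemma~\ref{lem:CR_9_inequalities} the first segment satisfies $x_0\in[1,4]$, and the argument then splits according to whether $x_0<4$ or $x_0=4$.

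In the case $x_0<4$, I would pick $\delta$ small enough that $x_0+\delta\leq 4$, which is possible because $4-x_0>0$ and $x_1\geq 1>0$. For such $\delta$ the trivial regime $d^*(l)=l$ for $l\leq 4$ applies, so $F_1(X,\delta)=\frac{2x_0+\delta}{x_0+\delta}$, and this tends to $2$ as $\delta\to 0^+$. Hence $\sup_{\delta}F_1(X,\delta)\geq 2>8/5$, settling this case with room to spare.

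The tight case is $x_0=4$. Here every admissible $\delta>0$ forces $x_0+\delta>4$, so we land in the nontrivial regime of $d^*$. For small $\delta$ we have $m^*(4+\delta)=1$ by Corollary~\ref{cor:min_turning} (indeed $m^*(l)=1$ on $(4,16]$), and Lemma~\ref{lem:optimal_cost} then gives $d^*(4+\delta)=(4+\delta)\cdot\frac{10}{8}=\frac{5(4+\delta)}{4}$. Substituting yields $F_1(X,\delta)=\frac{4(8+\delta)}{5(4+\delta)}$, which is monotone decreasing in $\delta$ and tends to $\frac{32}{20}=8/5$ as $\delta\to 0^+$. Therefore $\sup_{\delta\in(0,x_1]}F_1(X,\delta)=8/5$, and so $\br(X,\Sigma_9)\geq 8/5$. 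This is, by design, the very same computation performed for $R_4$ in the proof of Theorem~\ref{thm:R4.optimal}, which is no coincidence: the worst case of $R_4$ against the class is realized already in its first iteration, and $R_4$ is exactly the strategy with $x_0=4$.

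I do not expect a genuine obstacle, but the one delicate point is the boundary behavior of $d^*$ at $l=4$, and it is worth flagging it explicitly. The function $d^*$ is \emph{discontinuous} there: $d^*(4)=4$, whereas $\lim_{l\to 4^+}d^*(l)=5$, since crossing $l=4$ forces a turn and $m^*$ jumps from $0$ to $1$. When $x_0<4$ one can exploit the left regime and obtain the ratio $2$; but when $x_0=4$ the constraint $\delta>0$ prevents this and pins the first-iteration ratio at $8/5$ rather than $2$. Correctly accounting for this jump is the only subtlety, and handling it is precisely what makes $x_0=4$ the extremal (and optimal) choice, matching the upper bound of Theorem~\ref{thm:R4.optimal}.
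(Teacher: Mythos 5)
Your proposal is correct and takes essentially the same approach as the paper's own proof: both reduce to the $i=1$ term of~\eqref{eq:general.9}, split on $x_0<4$ versus $x_0=4$, and use $d^*(x_0+\delta)=x_0+\delta$ (ratio tending to $2$) in the first case and $d^*(4+\delta)=\tfrac{5(4+\delta)}{4}$ (ratio tending to $8/5$) in the second. The only cosmetic difference is that the paper cites the computation~\eqref{eq:R4:F1} from Theorem~\ref{thm:R4.optimal} instead of redoing it, a correspondence you explicitly note yourself.
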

\begin{proof}
Let $X=(x_0,\ldots)$. We will consider two cases, depending on whether $x_0<4$ or $x_0=4$. Suppose, first, that $x_0<4$. In this case, for
sufficiently small $\epsilon$, we have $m^*(x_0+\epsilon)=0$, which implies 
that $d^*(x_0+\epsilon)=x_0+\epsilon$, and therefore.
\[
F_1(X,\epsilon)=\frac{2x_0+\epsilon}{d^*(x_0+\epsilon)}=\frac{2x_0+\epsilon}{x_0+\epsilon},
\]
from which we obtain that 
\[
\sup_{\delta \in (0,x_1]} F_1(X,\delta)\geq F_1(X,\epsilon) \geq \frac{2x_0+\epsilon}{x_0+\epsilon} \rightarrow 2, \
\textrm{as $\epsilon \rightarrow 0^+$}.
\]
Next, suppose that $x_0=4$. In this case, for $\delta \in (0,x_1]$, it readily follows that $F_1(X,\delta)=F_1(R_4,\delta)$. 
Therefore, from~\eqref{eq:R4:F1}, we have that 
\[
\sup_{\delta \in (0,x_1]} F_1(X,\delta) = \sup_{\delta \in (0,x_1]} \frac{32+4\delta}{20+5\delta} =\frac{8}{5}.
\]
The lower bound follows directly from~\eqref{eq:general.9}.
\end{proof}

Recall that $G_2$ is the standard doubling strategy $G_2=(2^0,2^1,\ldots)$. The following theorem shows that within $\Sigma_9$, $G_2$ has worse discovery ratio than $R_4$. The proof follows along the lines of the  proof of Theorem~\ref{thm:R4.optimal}, where instead of using the search segments $\bar x_i$ of $R_4$, we use the search segment $x_i = 2^i$ of $G_2$.

\begin{theorem}
For the strategy $G_2=(x_0, x_1, \ldots)$,  we have $\br(G_2,\Sigma_9)=\frac{7}{3}.$
\label{thm:G2.bijective}
\end{theorem}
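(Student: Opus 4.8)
The plan is to follow the template of the proof of Theorem~\ref{thm:R4.optimal}, evaluating the discovery ratio through the expression~\eqref{eq:general.9}, but now with the segments $x_i=2^i$ of $G_2$. First I would record the elementary identities $T_{i-1}(G_2)=\sum_{j=0}^{i-1}2^j=2^i-1$ and, for $i\ge 2$, the numerator identity $2T_{i-1}+x_{i-2}=2^{i+1}+2^{i-2}-2=3(x_{i-1}+x_{i-2})-2$, writing $B_i:=x_{i-1}+x_{i-2}=3\cdot2^{i-2}$ for the length discovered at the turning point of iteration $i$. Thus for $i\ge 2$ we have $F_i(G_2,\delta)=\frac{(3B_i-2)+\delta}{d^*(B_i+\delta)}$, and the task reduces to maximizing this over $\delta\in(0,x_i-x_{i-2}]$ and then over $i$.

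The two smallest indices are handled directly, as in Theorem~\ref{thm:R4.optimal}. For $i=1$ (where the identity above does not apply because $x_{-1}=0$) the discovered length $1+\delta$ never exceeds $4$ on the relevant range, so $d^*=1+\delta$ and $F_1(G_2,\delta)=\frac{2+\delta}{1+\delta}$ is decreasing, giving $\sup_\delta F_1=2$. For $i=2$ we have $B_2=3\le 4$, so as $\delta\to0^+$ the length stays below $4$ and $d^*(B_2+\delta)=B_2+\delta$; here $F_2(G_2,\delta)=\frac{7+\delta}{3+\delta}$ on the initial subinterval, which is decreasing with supremum $7/3$, and it only jumps downward once the length passes $4$. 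This already produces the claimed value, so it remains to show that no index $i\ge 3$ exceeds it.

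For $i\ge 3$ I would localize the supremum as follows. On any maximal subinterval of $\delta$ on which $m^*(B_i+\delta)$ is constant, $d^*$ is a constant multiple of $B_i+\delta$ by Lemma~\ref{lem:optimal_cost}, so $F_i$ is a ratio of two affine functions of $\delta$ whose numerator has the larger constant term (since $3B_i-2>B_i$ for $B_i>1$); hence $F_i$ is decreasing on each such subinterval. Moreover, whenever $\delta$ crosses a threshold at which $m^*$ increments, $d^*$ jumps upward because the factor $\frac{6m+4}{3m+5}$ is increasing in $m$, so $F_i$ jumps downward. Consequently $\sup_\delta F_i$ is attained in the limit $\delta\to0^+$ and equals $\frac{3B_i-2}{d^*(B_i)}=\frac{(3B_i-2)(3m+5)}{B_i(6m+4)}$, where $m:=m^*(B_i)=m^*(3\cdot2^{i-2})\ge 1$ since $B_i\ge 6>4$ for $i\ge 3$; here I use that $B_i$ is never itself a threshold $(3n+5)2^n$ (that would force $3\mid 5$), so $m^*$ is locally constant to the right of $B_i$.

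The crux is then the algebraic claim that this limiting value is strictly below $7/3$ for every $i\ge 3$. Clearing denominators, $\frac{(3B_i-2)(3m+5)}{B_i(6m+4)}<\frac{7}{3}$ simplifies to $B_i(17-15m)<18m+30$. For $m\ge 2$ the left-hand side is negative while the right-hand side is positive, so it holds trivially; for $m=1$ it reads $2B_i<48$, i.e.\ $B_i<24$, which holds because $m^*(B_i)=1$ forces $B_i\le 16$ by Corollary~\ref{cor:min_turning}. Combining the three regimes gives $\br(G_2,\Sigma_9)=\sup_i\sup_\delta F_i(G_2,\delta)=7/3$. The main obstacle, and the genuine departure from the proof of Theorem~\ref{thm:R4.optimal}, is precisely this last step: unlike $R_4$, the doubling strategy does not satisfy the clean identity $m^*(x_{i-1}+x_{i-2}+\delta)=i$, so $m=m^*(B_i)$ must be tracked as a separate quantity, and the naive estimate $d^*(B_i)\ge \tfrac54 B_i$ is too weak for large $i$ (it spuriously exceeds $7/3$ near $i=6$). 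It is the monotonicity-plus-downward-jump argument that pins the supremum at $\delta\to0^+$, together with the exact case split on $m$, that makes the bound tight.
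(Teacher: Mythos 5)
Your proof is correct and takes essentially the same approach as the paper's: both evaluate the discovery ratio through~\eqref{eq:general.9} with $x_i=2^i$, obtain the critical value $7/3$ from $F_2$ as $\delta\to 0^+$ (where the discovered length $3+\delta$ is still at most $4$, so $d^*(3+\delta)=3+\delta$), and then use Lemma~\ref{lem:optimal_cost} and Corollary~\ref{cor:min_turning} to verify that $i=1$ and all $i\geq 3$ stay strictly below $7/3$. The only divergence is in that last verification: the paper splits on $i$ (using $m^*\geq 1$ for $i\in\{3,4\}$ and $m^*\geq 2$ for $i\geq 5$) with monotone upper bounds on $F_i$, whereas you pin $\sup_\delta F_i$ exactly at $\delta\to 0^+$ via the piecewise-decreasing/downward-jump structure and then split on $m^*(x_{i-1}+x_{i-2})=1$ versus $m^*(x_{i-1}+x_{i-2})\geq 2$ --- a slightly sharper execution of the same step (it even avoids the paper's harmless slip of writing $32+\delta$ instead of $34+\delta$ for $i=4$), not a genuinely different route.
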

\begin{proof}
We will express the discovery ratio using~\eqref{eq:general.9}. 
For $i=1$, and $\delta \in (0,x_1]$, we have that 
\[
F_1(G_2, \delta) = \frac{2x_0 + \delta}{d^*(x_0 + \delta)}=
				 \frac{2+\delta}{d^*(1+\delta)}.
\]
From the definition, $d^*(1+\delta) = 1+\delta$; this is because
$0 \leq m^*(1+\delta)\leq m^*(3)=0$. Then,
\begin{equation}
F_1(G_2, \delta) = \frac{2+\delta}{1+\delta}, \ \textrm{hence } \sup_{\delta \in (0,\bar x_1]} F_1(G_2,\delta)=2.
\label{eq:G2:F1}
\end{equation}
For $i = 2$ and $\delta \in (0,x_2 - x_0]$, we have that 
\[
F_2(G_2, \delta) = \frac{3x_0 + 2x_1 + \delta}{d^*(x_0 + x_1 + \delta)}=
				 \frac{7+\delta}{d^*(3+\delta)}.
\]
From Lemma~\ref{lem:optimal_cost}, $d^*(3+\delta)$ either equals to $3+\delta$ if $\delta \in (0,1]$, or equals to $(3+\delta)\cdot\frac{6\cdot1+4}{3\cdot1+5} = \frac{5(3+\delta)}{4}$ if $\delta \in (1, x_2-x_0]$. This is because
$0 \leq m^*(3+\delta)\leq m^*(6)=1$. Then, for $\delta \in (0,1]$,
\begin{equation}
F_2(G_2, \delta) = \frac{7+\delta}{3+\delta}, \ \textrm{hence } \sup_{\delta \in (0,1]} F_2(G_2,\delta)=\frac{7}{3}.
\label{eq:R4:F2:1}
\end{equation}
For $\delta \in (1, x_2-x_0]$, 
\begin{equation}
F_2(G_2, \delta) = \frac{7+\delta}{\frac{5(3+\delta)}{4}}=\frac{28+4\delta}{15+5\delta}, \ \textrm{hence } \sup_{\delta \in (1, x_2 - x_0]} F_2(G_2,\delta)=\frac{28}{15}.
\label{eq:R4:F2:2}
\end{equation}
Combining~\eqref{eq:R4:F2:1} and~\eqref{eq:R4:F2:2} yields
\begin{equation}
\sup_{\delta \in (0,x_2-x_0]} F_2(G_2,\delta)=\frac{7}{3}.
\label{eq:G2:F2}
\end{equation}
For given $i \geq 3$, and $\delta \in (0,x_{i}-x_{i-2}]$, we have
\[
F_i(G_2, \delta) = \frac{2T_{i-1} + x_{i-2} + \delta}{d^*(x_{i-1}+x_{i-2}+\delta)} =\frac{2^{i+1}-2+2^{i-2}+\delta}{d^*(2^{i-1}+2^{i-2}+\delta)} = \frac{9\cdot2^{i-2}-2+\delta}{d^*(2^{i-1}+2^{i-2}+\delta)}.
\]
Moreover, from Lemma~\ref{lem:optimal_cost} we have that 
\[
d^*(2^{i-1}+2^{i-2}+\delta) =(2^{i-1}+2^{i-2}+\delta) \cdot \frac{6m^*(2^{i-1}+2^{i-2}+\delta)+4}{3m^*(2^{i-1}+2^{i-2}+\delta)+5}.
\]
and 
\[
m^*(x_{i-1}+x_{i-2}+\delta) \geq m^*(x_{i-1}+x_{i-2}) = m^*(3\cdot2^{i-2}).
\]
For $i \in \{3, 4\}$ and $\delta \in (0, x_i - x_{i-2}]$, $m^*(x_{i-1}+x_{i-2}+\delta) \geq  m^*(3\cdot2^{i-2}) \geq 1$,
then 
\[
d^*(2^{i-1}+2^{i-2}+\delta) \geq (2^{i-1}+2^{i-2}+\delta) \cdot \frac{6\cdot1+4}{3\cdot1+5} = \frac{15\cdot2^{i-2}+5\delta}{4},
\]
hence, for $i = 3$,
\[
F_3(G_2, \delta) = \frac{9\cdot2^{i-2}-2+\delta}{d^*(2^{i-1}+2^{i-2}+\delta)} \leq \frac{16+\delta}{\frac{30+5\delta}{4}}=\frac{64+4\delta}{30+5\delta}.
\]
We obtain that
\begin{equation}
\sup_{\delta \in (0,x_3 - x_1]} F_3(G_2,\delta) \leq \frac{64}{30}.
\label{eq:G2:F3}
\end{equation}
For $i = 4$,
\[
F_4(G_2, \delta) = \frac{9\cdot2^{i-2}-2+\delta}{d^*(2^{i-1}+2^{i-2}+\delta)} \leq \frac{32+\delta}{\frac{}{60+5\delta}{4}}=\frac{128+4\delta}{60+5\delta}.
\]
We obtain that
\begin{equation}
\sup_{\delta \in (0,x_4 - x_2]} F_4(G_2,\delta) \leq \frac{128}{60}.
\label{eq:G2:F4}
\end{equation}
For $i \geq 5$ and $\delta \in (0, x_i - x_{i-2}]$, $m^*(x_{i-1}+x_{i-2}+\delta) \geq m^*(3\cdot2^{i-2}) \geq 2$,
then 
\[
d^*(2^{i-1}+2^{i-2}+\delta) \geq (2^{i-1}+2^{i-2}+\delta) \cdot \frac{6\cdot2+4}{3\cdot2+5} = \frac{48\cdot2^{i-2}+16\delta}{11},
\]
and
\[
F_i(G_2, \delta) = \frac{9\cdot2^{i-2}-2+\delta}{d^*(2^{i-1}+2^{i-2}+\delta)} \leq \frac{9\cdot2^{i-2}-2+\delta}{\frac{48\cdot2^{i-2}+16\delta}{11}} = \frac{99\cdot2^{i-2}-22}{48\cdot2^{i-2}} \leq \frac{99}{48},
\]{}
hence, for $i \geq 5$,
\begin{equation}
\sup_{\delta \in (0,x_i - x_{i-2}]} F_i(G_2,\delta) \leq \frac{99}{48}.
\label{eq:G2:F5}
\end{equation}

Combining~\eqref{eq:G2:F1},~\eqref{eq:G2:F2},~\eqref{eq:G2:F3},~\eqref{eq:G2:F4} and~\eqref{eq:G2:F5} yields the proof of the theorem. 
\end{proof}

A natural question arises: Is $R_4$ the unique strategy of optimal discovery ratio in $\Sigma_9$? The following theorem provides 
evidence that optimal strategies cannot be radically different than $R_4$, in that they must mimic it in the first few iterations. 

\begin{theorem}
Strategy $X=(x_0, x_1, \ldots) \in \Sigma_9$, has optimal discovery ratio in $\Sigma_9$ only if 
$x_i=\bar x_i$, for $0\leq i \leq 4$. 
\label{thm:bijectively.optimal.c9}
\end{theorem}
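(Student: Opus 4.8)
The plan is to show that any competitively optimal strategy $X=(x_0,x_1,\ldots)$ with optimal discovery ratio $8/5$ must satisfy $x_i=\bar x_i$ for $0 \leq i \leq 4$, by arguing contrapositively: if $X$ deviates from $R_4$ at some early index, then some functional $F_i(X,\delta)$ from~\eqref{eq:general.9} is forced to exceed $8/5$. The anchor is Theorem~\ref{thm:bij.lower}: we already know $x_0=4$ is forced, since $x_0<4$ produces $\sup_\delta F_1(X,\delta) \to 2 > 8/5$. So the base case $x_0=\bar x_0=4$ is essentially free from the previous theorem. The work is then to propagate this rigidity up to $i=4$, establishing $x_i = \bar x_i$ one index at a time.

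First I would set up an inductive scheme: assume $x_j=\bar x_j$ for all $j<i$ (with $i \leq 4$), and show that if $x_i<\bar x_i$ (recall $x_i \leq \bar x_i$ always holds by Lemma~\ref{lem:R_4_max}, so strict inequality is the only possible deviation), then $\br(X,\Sigma_9) > 8/5$. The natural place to detect the deviation is in the functional $F_{i+1}(X,\delta)$, whose numerator $2\sum_{j=0}^{i}x_j + x_{i-1}+\delta$ depends on $x_i$, evaluated as $\delta \to 0^+$. Because $x_0,\ldots,x_{i-1}$ match $R_4$, the denominator $d^*(x_i+x_{i-1}+\delta)$ and numerator are identical to the $R_4$ expressions \emph{except} for the $x_i$ contribution; decreasing $x_i$ below $\bar x_i$ changes both but by different amounts, and I would compute the partial derivative of $F_{i+1}$ with respect to $x_i$ (holding the matched earlier segments fixed) to show that the supremum over $\delta$ strictly increases as $x_i$ drops. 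The key subtlety is that $d^*$ is piecewise-defined through $m^*$, so I must track which ``turn count'' regime $x_i+x_{i-1}+\delta$ falls into; for the small-$\delta$ boundary the relevant value of $m^*$ is pinned down exactly as in the proof of Theorem~\ref{thm:R4.optimal}, giving a clean closed form to differentiate or compare directly against $8/5$.

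The main obstacle I anticipate is the bookkeeping around $d^*$ and $m^*(l)$ near the regime boundaries: a decrease in $x_i$ shifts the thresholds $\bar x_i+\bar x_{i-1}$ that determine $m^*$, so I must verify that over the whole admissible interval $\delta \in (0, x_{i+1}-x_{i-1}]$ the minimizing $d^*$ expression still yields a ratio strictly above $8/5$ at some point, rather than merely at the endpoint. A cleaner route may be to exploit monotonicity: as shown in Theorem~\ref{thm:R4.optimal}, for $R_4$ each $F_i(R_4,\delta)$ is monotone decreasing in $\delta$ with $\sup_\delta F_1 = 8/5$ attained in the limit $\delta \to 0$, while all $F_i$ for $i \geq 2$ stay strictly below $8/5$. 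Lowering $x_i$ \emph{raises} the numerator-to-denominator ratio of $F_{i+1}$ at $\delta \to 0^+$ relative to the $R_4$ baseline, and since the $R_4$ baseline for $F_{i+1}$ already equals the tight fraction $\frac{9(i+1)(3i+8)}{(3i+5)(6i+10)}$ which is strictly less than $8/5$ but approaches it, I would quantify exactly how much slack remains and show any strict decrease in $x_i$ consumes more than that slack. Carrying this out for each $i \in \{1,2,3,4\}$ with the explicit segment values $\bar x_i=(i+2)2^{i+1}$ gives the five required equalities; the reason the argument stops at $i=4$ is precisely that from $i=5$ onward the $R_4$ baseline of $F_{i+1}$ drops far enough below $8/5$ that a deviation can be absorbed, which is why the theorem claims rigidity only through the fifth iteration.
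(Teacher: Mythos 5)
Your high-level architecture coincides with the paper's proof: induction with the base case $x_0=\bar x_0=4$ inherited from Theorem~\ref{thm:bij.lower}, Lemma~\ref{lem:R_4_max} to reduce any deviation to a strict inequality $x_i<\bar x_i$, detection of the deviation in $F_{i+1}(X,\delta)$ as $\delta\to 0^+$, and monotonicity in $x_i$ to pass to the worst case. But the mechanism you give for pushing the ratio above $8/5$ has a genuine gap, located precisely at the step you describe as ``pinned down exactly as in the proof of Theorem~\ref{thm:R4.optimal}''. The engine of the paper's argument is a \emph{discrete drop} in $m^*$: because the deviation is strict, for all sufficiently small $\delta$ we have $x_i+x_{i-1}+\delta\leq \bar x_i+\bar x_{i-1}$, hence by monotonicity of $m^*$ and Corollary~\ref{cor:min_turning}, $m^*(x_i+x_{i-1}+\delta)\leq i$ --- one turn \emph{fewer} than the value $i+1$ that applies to $R_4$ itself on this interval. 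Through Lemma~\ref{lem:optimal_cost} this replaces the coefficient $\frac{6(i+1)+4}{3(i+1)+5}$ in the denominator $d^*$ by the strictly smaller $\frac{6i+4}{3i+5}$, and this constant-factor gain is exactly what lifts the $R_4$ baseline $\frac{9(i+1)(3i+8)}{(3i+5)(6i+10)}$ to the paper's bound $\frac{9(i+1)}{6i+4}$, which exceeds $\frac{8}{5}$ precisely for $i\leq 4$. If you instead keep the $R_4$ value $m^*=i+1$, as your sentence suggests, the identical computation collapses back to the baseline, which is below $\frac{8}{5}$ for every $i$, and no contradiction is obtained.

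This is not a repairable bookkeeping detail but the crux of the proof: your derivative/``slack-consumption'' route is continuous in $x_i$, yet the deviation $\eta=\bar x_i-x_i$ may be arbitrarily small, so any continuous increase of the ratio is of order $\eta$ and can never cross the fixed positive slack $\frac{8}{5}-\frac{9(i+1)(3i+8)}{(3i+5)(6i+10)}$; only the discontinuity of $d^*$ at the threshold $\bar x_i+\bar x_{i-1}$ makes the increase uniform in $\eta$. Two further misstatements reinforce the gap. First, the thresholds governing $m^*$ do not ``shift'' when $x_i$ decreases: by Corollary~\ref{cor:min_turning} they are fixed by $R_4$, and it is the evaluation point $x_i+x_{i-1}+\delta$ that falls below the fixed threshold. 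Second, the baselines do not ``approach'' $8/5$: they peak at about $1.562$ (at $i=2$) and then decrease towards $\frac{3}{2}$; the reason the theorem stops at $i=4$ is that the product of the baseline with the jump factor, namely $\frac{9(i+1)}{6i+4}$, crosses below $\frac{8}{5}$ between $i=4$ (where it equals $\frac{45}{28}>\frac{8}{5}$) and $i=5$ (where it equals $\frac{27}{17}<\frac{8}{5}$).
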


\begin{proof}
Consider a strategy $X(x_0, x_1, \ldots) \in \Sigma_9$. Recall that the discovery ratio of $X$ is given by Equation~\eqref{eq:general.9}.
We will prove the theorem by induction on $i$.  

We first show the base case, namely $i = 0$. The base case holds by the argument used in the proof of Theorem~\ref{thm:bij.lower} which shows that if $x_0<4$, then $\br(X,\Sigma_9) \geq 2$. For the induction step, suppose that, if $X$ has optimal discovery ratio then for $j \in [0,i]$, $x_j = \bar x_j$, with  $i < 4$. 
We will show $x_{i+1}=\bar x_{i+1}$ by contradiction, hence assume $x_{i+1} < \bar x_{i+1}$. 
For sufficiently small $\epsilon > 0$, we have 
\begin{align*}
m^*(x_{i+1}+x_{i} + \epsilon) &= m^*(x_{i+1} + \bar x_i + \epsilon) \tag{by induction hypothesis}\\
& \leq m^*(\bar x_{i+1} + \bar x_i) \tag{by monotonicity of $m^*$ and Lemma~\ref{lem:R_4_max}}\\
&= i+1, \tag{by definition of $m^*$}
\end{align*}
which implies that, by Lemma~\ref{lem:optimal_cost}, 
\begin{equation}
\label{eq:d_last}
d^*(x_{i}+x_{i-1}+\epsilon) = (x_{i}+x_{i-1}+\epsilon)\cdot\frac{6\cdot m^*(x_{i+1}+x_{i} + \epsilon)+4}{3\cdot m^*(x_{i+1}+x_{i} + \epsilon)+5} \leq  (x_{i}+x_{i-1}+\epsilon)\cdot\frac{6\cdot(i+1)+4}{3\cdot(i+1)+5}.
\end{equation}
Therefore
\begin{align*}
F_{i+2}(X, \epsilon) &= \frac{2\cdot\sum_{j=0}^{i+1}x_j + x_{i} + \epsilon}{d^*(x_{i+1}+x_{i}+\epsilon)}\\
&= \frac{2T_{i}(R_4)+ 2x_{i+1} + \bar x_{i}+\epsilon}{d^*(x_{i+1}+\bar x_{i}+\epsilon)} \tag{by ind. hyp.}\\
	& \geq \frac{2T_{i}(R_4)+ 2x_{i+1} + \bar x_{i}+\epsilon}{(x_{i+1}+\bar x_{i}+\epsilon)\cdot\frac{6\cdot(i+1)+4}{3\cdot(i+1)+5}} \tag{Equation~\eqref{eq:d_last}}\\
	&= \frac{(i+1)2^{i+3} + (i+2)2^{i+1} + 2x_{i+1}+\epsilon}{(x_{i+1}+(i+2)2^{i+1}+\epsilon)\cdot\frac{6\cdot(i+1)+4}{3\cdot(i+1)+5}} \tag{Corollary~\ref{cor:R4_xn_and_Tn}}\\
	&\geq  \frac{(i+1)2^{i+3}+(i+2)2^{i+1}+(i+3)2^{i+3}+\epsilon}{(i+3)2^{i+2}+(i+2)2^{i+1}+\epsilon}\cdot\frac{3i+8}{6i+10}. \tag{monoton. on $x_{i+1}$}\\
\end{align*}
Hence
\[
\sup_{\delta \in (0,x_{i+2} - x_{i}]} F_{i+2}(X,\delta) \geq \frac{(i+1)2^{i+3}+(i+2)2^{i+1}+(i+3)2^{i+3}}{(i+3)2^{i+2}+(i+2)2^{i+1}}\cdot\frac{3i+8}{6i+10} = \frac{9i+18}{6i+10},
\]
which is greater than $\frac{8}{5}$ if $i \leq 3$. We conclude, from~\eqref{eq:general.9} that $\br(X,\Sigma_9)>8/5$, which is a contradiction. 
\end{proof}

We conclude with a simple observation.
\begin{lemma}
	We have $\textrm{dr}(R_t,\Sigma)=3$ for any $t>1$. 
\end{lemma}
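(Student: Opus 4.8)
The plan is to apply Theorem~\ref{thm:br.all.strategies} directly, since it already expresses $\br(X,\Sigma)$ as a supremum of an explicit functional of the search segments, namely $\sup_{i\in\mathbb{N}^*}\frac{2\sum_{j=0}^{i-1}x_j+x_{i-2}}{x_{i-1}+x_{i-2}}$. The first observation I would record is that $R_t = t\cdot R_1$: by Lemma~\ref{lem:reach_strategy} its $n$-th segment is $x_n=t(1+\frac{n}{2})2^n$, i.e. exactly $t$ times the corresponding segment of $R_1$. Because the functional above is a ratio of forms that are homogeneous of degree one in the $x_j$, the factor $t$ cancels, so $\br(R_t,\Sigma)$ does not depend on $t$. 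This scale-invariance is the conceptual heart of the statement and immediately reduces the claim to evaluating the single expression; it also explains why the answer $3$ holds uniformly for all $t>1$.

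Next I would substitute the values from Lemma~\ref{lem:reach_strategy}. For $i\ge 2$ all the relevant segments are positive and equal to $\sum_{j=0}^{i-1}x_j=T_{i-1}=t\,i\,2^{i-1}$, $x_{i-1}=t(i+1)2^{i-2}$, and $x_{i-2}=t\,i\,2^{i-3}$. Factoring $t\,2^{i-3}$ out of numerator and denominator, the numerator $2T_{i-1}+x_{i-2}$ becomes $t\,i\,2^{i-3}\cdot 9$ and the denominator $x_{i-1}+x_{i-2}$ becomes $t\,2^{i-3}(3i+2)$, so that
\[
\frac{2\sum_{j=0}^{i-1}x_j+x_{i-2}}{x_{i-1}+x_{i-2}}=\frac{9i}{3i+2},\qquad i\ge 2.
\]
The one point requiring a little care is the index $i=1$: here the convention $x_{-1}=0$ from the preliminaries makes the functional equal to $\frac{2x_0}{x_0}=2$, so the closed form $\frac{9i}{3i+2}$ only applies for $i\ge 2$. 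Since $2<3$, this boundary term does not affect the outcome.

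Finally I would take the supremum over $i$. The map $i\mapsto\frac{9i}{3i+2}$ is strictly increasing and satisfies $\lim_{i\to\infty}\frac{9i}{3i+2}=3$, so $\sup_{i\ge 2}\frac{9i}{3i+2}=3$; together with the value $2$ at $i=1$ this yields $\br(R_t,\Sigma)=3$ for every $t>1$, as claimed. I do not anticipate a genuine obstacle, as the argument is essentially a substitution followed by a monotonicity check; the only thing to verify carefully is that the supremum is correctly identified as the asymptotic limit $3$ (not attained at any finite $i$) rather than a value reached at small $i$.
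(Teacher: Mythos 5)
Your proof is correct and follows essentially the same route as the paper: apply Theorem~\ref{thm:br.all.strategies}, substitute the closed form from Lemma~\ref{lem:reach_strategy}, simplify the functional to $\frac{9i}{3i+2}$, and identify the supremum as the limit $3$. In fact you are slightly more careful than the paper's own computation, which silently drops the factor $t$ and does not separately treat the boundary index $i=1$ (where the convention $x_{-1}=0$ gives the value $2$ rather than the closed form); neither point affects the result.
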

\begin{proof}
	By Theorem~\ref{thm:br.all.strategies} we have
	\begin{align*}
	 	\textrm{dr}(R_t,\Sigma)
	 	&= 
	 	\sup_{i \in \mathbb N^* } \frac{2\sum_{j=0}^{i-1} x_j+x_{i-2}}{x_{i-1}+x_{i-2}} \\
	 	&= \sup_{i \in \mathbb N^* } \frac{2n2^{n-1} + (1+\frac{n-2}{2})2^{n-2}}{(1+\frac{n-1}{2})2^{n-1}+(1+\frac{n-2}{2})2^{n-2}} \\
	 	&= \sup_{i \in \mathbb N^* } \frac{n+\frac14+\frac{n-2}8}{\frac12+\frac{n-1}4+\frac14+\frac{n-2}8} \\
	 	&= \sup_{i \in \mathbb N^* } \frac{9n}{3n+2} \\
	 	& = 3.
	 \end{align*} 
\end{proof}

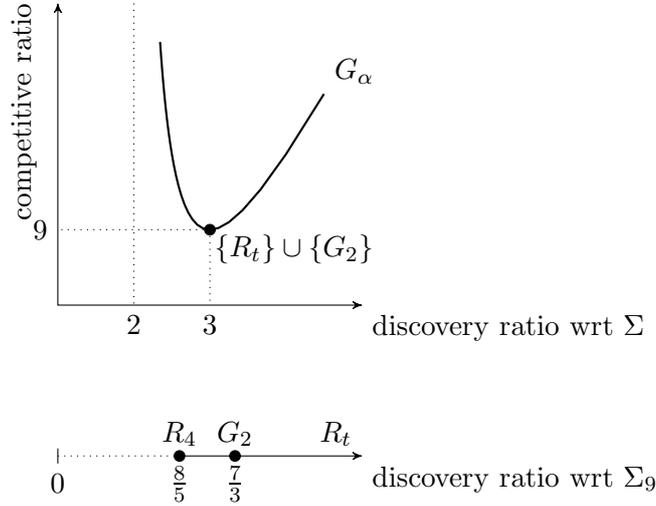
\begin{figure}[htb!]
\begin{center}
  \begin{tikzpicture}
    \coordinate (y) at (5,8);
    \coordinate (x) at (1,12);
    \draw[<->] (y) node[below right] {discovery ratio wrt $\Sigma$} -- (1,8) --  (x) node[above left=2mm, rotate=90]
    {competitive ratio};
    \path
    coordinate (start) at (6,0)
    coordinate (c1) at +(3,5)
    coordinate (c2) at +(1.75,5)
    coordinate (slut) at (.5,2.7)
    coordinate (top) at (2,4.2);

    \draw[important line]  (4.500000,10.800000) node[above right] {$G_\alpha$} -- (4.000000,10.000000) -- (3.666667,9.533333) -- (3.428571,9.257143) -- (3.250000,9.100000) -- (3.111111,9.022222) -- (3.000000,9.000000) -- (2.909091,9.018182) -- (2.833333,9.066667) -- (2.769231,9.138462) -- (2.714286,9.228571) -- (2.666667,9.333333) -- (2.625000,9.450000) -- (2.588235,9.576471) -- (2.555556,9.711111) -- (2.526316,9.852632) -- (2.500000,10.000000) -- (2.476190,10.152381) -- (2.454545,10.309091) -- (2.434783,10.469565) -- (2.416667,10.633333) -- (2.400000,10.800000) -- (2.384615,10.969231) -- (2.370370,11.140741) -- (2.357143,11.314286) -- (2.344828,11.489655);
    \draw[dotted] (3,9) -- (3,8) node[below] {3};
    \draw[dotted] (2,12) -- (2,8) node[below] {2};
    \draw[dotted] (3,9) -- (1,9) node[left] {9};
    \filldraw [black] 
     (3,9) circle (2pt) node[below right=-1mm] {$\{R_t\}\cup\{G_2\}$};

    \draw[->] (2.6,6)  -- (5,6) node[below right] {discovery ratio wrt $\Sigma_9$} node[above left] {$R_t$};
    \draw[dotted] (1,6) -- (2.6,6);
    \draw (1,6.1) -- (1,5.9) node[below] {$0$};
    \filldraw [black] 
     (2.6,6) circle (2pt) node[above] {$R_4$} node[below] {$\frac85$};
    \filldraw [black] 
     (3.33,6) circle (2pt) node[above] {$G_2$} node[below] {$\frac73$};
    \end{tikzpicture}
\end{center}
	\caption{Top: Illustration of the tradeoff between the competitive and discovery ratios with respect to $\Sigma$ in $G_\alpha$. Here, each point corresponds to a value of $\alpha>1$. The bold point corresponds to the strategy $G_2$ and all strategies $R_t$. Bottom: The discover ratios with respect to $\Sigma_9$.}
	\label{fig:Ga_cr_br}
\end{figure}

\bibliography{bijective_ls}




\end{document}